\documentclass[american]{article}
\usepackage[T1]{fontenc}
\usepackage[latin9]{inputenc}
\usepackage[active]{srcltx}
\usepackage{babel}
\usepackage{cprotect}
\usepackage{booktabs}
\usepackage{units}
\usepackage{url}
\usepackage{amsmath}
\usepackage{amsthm}
\usepackage{amssymb}
\usepackage{graphicx}
\PassOptionsToPackage{normalem}{ulem}
\usepackage{ulem}
\usepackage{multirow}
\usepackage[bookmarks=true,bookmarksnumbered=false,bookmarksopen=false,
]{hyperref}
\usepackage[switch]{lineno}
\usepackage[hypcap]{caption}  
\makeatletter

\numberwithin{equation}{section}
\numberwithin{figure}{section}
\theoremstyle{plain}
\newtheorem{thm}{\protect\theoremname}
\theoremstyle{remark}
\newtheorem{rem}[thm]{\protect\remarkname}
\theoremstyle{plain}
\newtheorem{prop}[thm]{\protect\propositionname}

\usepackage{fullpage}
\usepackage{authblk}

\usepackage[section]{placeins}
\usepackage{hyphenat}
\usepackage{fancyvrb}
\usepackage{subcaption}
\usepackage{bbm} 
\usepackage{relsize}
\usepackage{amsbsy}
\usepackage{textcomp}
\usepackage{pgf} 
\usepackage{xcolor}
\usepackage{array}
\usepackage{tabularx}
\usepackage{float}
\usepackage{adjustbox}
\usepackage{enumitem}
\usepackage{natbib}
\usepackage{algorithm}
\usepackage{algorithmic}

\makeatother

\providecommand{\propositionname}{Proposition}
\providecommand{\remarkname}{Remark}
\providecommand{\theoremname}{Theorem}

\title{Market-Informed Valuation of GMMB Riders with Surrender Options under a Heston Stochastic-Local Volatility Model}
\author[1]{\textsc{Ludovic Gouden\`ege}\thanks{Email: \texttt{ludovic.goudenege@math.cnrs.fr}. ORCID: \url{https://orcid.org/0000-0002-6449-5888}.}}
\author[2]{\textsc{Andrea Molent}\thanks{Corresponding author. Email: \texttt{andrea.molent@uniud.it}. ORCID: \url{https://orcid.org/0000-0002-0887-826X}.}}
\author[3]{\textsc{Xiao Wei}\thanks{Email: \texttt{weixiao@cufe.edu.cn}. ORCID: \url{https://orcid.org/0009-0008-0835-7238}.}}
\author[2]{\textsc{Antonino Zanette}\thanks{Email: \texttt{antonino.zanette@uniud.it}. ORCID: \url{https://orcid.org/0000-0001-5958-9014}.}}
\affil[1]{Laboratoire de Math\'ematiques et Mod\'elisation d'\'Evry (LaMME), CNRS UMR 8071, Universit\'e Paris-Saclay \`Evry, France}
\affil[2]{Dipartimento di Scienze Economiche e Statistiche, Universit\`a degli Studi di Udine, Udine, Italy}
\affil[3]{China Institute for Actuarial Science \& School of Insurance, Central University of Finance and Economics, China}
\date{}

\hypersetup{
	pdftitle={Market-Informed Valuation of GMMB Riders with Surrender Options under a Heston Stochastic-Local Volatility Model},
	pdfauthor={Ludovic Goudenege; Andrea Molent; Xiao Wei; Antonino Zanette},
	pdfsubject={Valuation of surrenderable GMMB riders under local and stochastic-local volatility},
	pdfkeywords={variable annuities, guaranteed minimum maturity benefits,
		surrender options, Heston stochastic-local volatility,
		hybrid tree/finite-difference method}
}

\newcommand{\dd}{\,\mathrm{d}}
\newcommand{\E}{\mathbb{E}}
\newcommand{\Q}{\mathbb{Q}}

\newcommand{\calA}{\mathcal{A}}

\newcommand{\sigLV}{\sigma_{\mathrm{LV}}}
\newcommand{\sigDup}{\sigma_{\mathrm{Dup}}}

\newcommand{\barrho}{\bar\rho}

\DeclareMathOperator*{\esssup}{ess\,sup}

\newcommand{\PriceDecimals}{2}
\newcommand{\pnum}[1]{\pgfmathprintnumber[fixed zerofill,precision=\PriceDecimals]{#1}}
\providecommand{\num}[2][]{\pgfmathprintnumber[#1]{#2}}
\newcommand{\ptime}[1]{%
	\pgfmathprintnumber[
	fixed,
	fixed relative,
	precision=2
	]{#1}%
}
\newcolumntype{F}{r}
\newcolumntype{P}{r}

\begin{document}
	\maketitle
	\begin{flushleft}
		\rule{1\columnwidth}{1pt}
		\par\end{flushleft}
	
	\begin{flushleft}
		{\large\textbf{Abstract}}{\large\par}
		\par\end{flushleft}
	We develop a market-informed valuation framework for
        guaranteed minimum maturity benefit (GMMB) riders with
        rational surrender under the Heston stochastic-local
        volatility (SLV) model. The guarantee is written on the
        fee-deducted account value and is considered both in its
        terminal-only form and in the presence of early surrender
        rights. The Heston SLV specification combines stochastic volatility with a leverage function calibrated to a prescribed local-volatility surface. The leverage surface is obtained through a forward Markovian-projection equation so that, at the model level, the SLV dynamics are constrained to the same one-dimensional marginals as the corresponding local-volatility (LV) model. The latter is used only as a one-factor benchmark, allowing us to isolate the effect of stochastic volatility on continuation values and surrender decisions while preserving the same option-calibrated local-volatility target.
We derive the associated backward pricing equations and propose a hybrid tree/finite-difference algorithm for the SLV model with a calibrated leverage function. Synthetic experiments and a market-informed case study show that SLV and LV valuations are numerically close for terminal-only guarantees, as expected from the common marginal target, whereas materially larger differences can arise once surrender is allowed. These differences are reflected in guarantee values, fair insurance fees and volatility-dependent surrender regions. The results indicate that matching one-date marginals implied by vanilla-option prices does not eliminate model risk for insurance liabilities whose value depends on conditional continuation dynamics and endogenous surrender decisions.
	
	\medskip
	\noindent\textbf{Keywords:} variable annuities; guaranteed minimum maturity benefits; surrender options; Heston stochastic-local volatility; hybrid tree/finite-difference method.
	
	\medskip
	\noindent\textbf{2020 MSC: 91G20; 91G60; 65M06; 65C20.}
	
	\noindent\textbf{JEL: C63; G12; G22; G23.}
	
	\begin{flushleft}
		\rule{1\columnwidth}{1pt}
		\par\end{flushleft}
	
	\section{Introduction}
	\label{sec:introduction}
	
	Variable annuities combine investment exposure with insurance protection. During the accumulation period, the policyholder's account is linked to an underlying fund, while the insurer provides contractual guarantees against adverse market outcomes. Guaranteed minimum benefits have therefore been studied extensively in actuarial finance, both because they are option-like liabilities and because their value is sensitive to policyholder behaviour, fees and market volatility. General valuation frameworks for variable-annuity guarantees are given, among others, by \citet{Bauer2008}. Related analyses of lapse and surrender features, withdrawal guarantees and participating contracts include \citet{MilevskySalisbury2001}, \citet{MilevskySalisbury2006}, \citet{Bacinello2003}, \citet{Bernard2014}, \citet{HyndmanWenger2014} and \citet{Siu2005}.
	
	This paper focuses on guaranteed minimum maturity benefits (GMMBs) with surrender options. A GMMB guarantees a minimum account value at the end of the accumulation period. In the specification considered here, the embedded put-like guarantee rider may be exercised before maturity, so the policyholder compares its immediate surrender benefit, net of the time-dependent surrender adjustment, with its continuation value. In a constant-volatility setting, this rider can be represented as an American-style put option written on the fee-deducted account value; related analyses provide free-boundary representations, integral pricing formulae and fair insurance fees \citep{Shen2016}. Earlier and related work on equity-linked and variable-annuity contracts with surrender options includes \citet{ShenXu2005}, \citet{Costabile2008} and \citet{Bernard2014}. We adopt this standard guarantee-rider specification, but replace the constant-volatility fund dynamics by volatility models designed to reproduce a non-flat volatility structure. The surrender considered below is therefore the exercise of the guarantee rider, not an empirical model of lapse of the entire variable-annuity account.
	
	Stochastic volatility has already been introduced into closely related surrenderable variable-annuity problems. In particular, \citet{KangZiveyi2018} value GMMBs with optimal surrender under Heston stochastic volatility and stochastic interest rates, while \citet{HuhJeonPark2023} study a surrender option under multiscale stochastic volatility; related state-dependent fee incentives are analysed by \citet{MacKayVachonCui2023}. These contributions establish the relevance of stochastic volatility for surrenderable guarantees. The question addressed here is narrower: the local-volatility model and the Heston stochastic-local volatility model are constrained to the same one-dimensional marginal information generated by a common local-volatility target, so that pricing differences beyond calibration and discretisation errors can be interpreted in terms of the conditional dynamics that enter the stopping problem.
	
	The motivation for doing so is standard in option pricing but particularly relevant for long-dated insurance liabilities. A constant volatility parameter cannot reproduce the observed dependence of implied volatility on strike and maturity. Local-volatility (LV) models were introduced precisely to address this limitation by replacing the constant volatility with a deterministic function of time and state. The constructions of \citet{DermanKani1994} and \citet{Dupire1994} provide the classical link between European option prices and the local volatility surface. Numerical implementation and regularisation of the Dupire inversion have been studied in several forms, including finite-difference and spline-based approaches; see, for example, \citet{AchdouPironneau2005}, \citet{Crepey2003}, \citet{Itkin2020} and recent learning-based approaches such as \citet{Wang2025} and \citet{MolentVellekoop2026}. Within the actuarial literature, \citet{DeelstraRayee2013} investigate the valuation of variable-annuity guarantees in a local-volatility framework with stochastic interest rates. In the present paper, the LV specification serves as the one-factor benchmark associated with the prescribed or market-calibrated target local-volatility surface.
	
	In the market-data application, the observable inputs are constructed
	consistently from derivative prices on the reference index: the target LV
	surface is calibrated from European put prices, while matched call--put quotes
	are used to infer the deterministic risk-free and dividend-yield term
	structures through put--call parity. 
	
	Stochastic-local volatility (SLV) models add a second layer of modelling flexibility. In this paper we use a Heston SLV specification, which combines the Heston state variable $V_t$ introduced by \citet{Heston1993} with a leverage function that preserves consistency with a target local-volatility surface. The theoretical basis is Markovian projection: different multidimensional diffusions may have the same one-dimensional marginal distributions, as in \citet{Gyongy1986}, but different conditional dynamics. This is the reason why a local-volatility model and an SLV model can be calibrated to the same vanilla option information and yet produce different prices for path-dependent or early-exercise claims. Practical accounts of stochastic and stochastic-local volatility modelling and calibration include \citet{Gatheral2006}, \citet{RenMadanQian2007}, \citet{GuyonHenryLabordere2013} and \citet{Bergomi2016}.  	In the market application, we refer to this Heston SLV specification as \emph{market-informed}: the Heston parameters are informed by volatility-market and joint equity--volatility information, while the leverage function is calibrated to the common LV target.
	
	The calibration of the leverage function is itself a nonlinear problem, since the conditional moment entering the projection identity is generated by the joint law of the SLV process. Approaches based on partial differential equations (PDEs) have therefore received considerable attention. \citet{WynsInTHout2018} introduce an adjoint semidiscretisation of the forward Kolmogorov equation that yields exact calibration of the semidiscretised SLV model to the corresponding semidiscretised local-volatility model for non-path-dependent European payoffs. \citet{SaporitoYangZubelli2019} cast SLV calibration as an inverse problem and use regularisation techniques to obtain stable leverage surfaces, especially in low-density regions where the conditional expectation is numerically fragile. The mathematical well-posedness of calibrated stochastic-local volatility dynamics is also delicate, since the leverage function depends on the conditional law of the state process; see, for example, \citet{JourdainZhou2020}.
	
	The distinction between LV and SLV is especially relevant for surrenderable guarantees. European values are determined by one-date marginal distributions, whereas early surrender depends on conditional continuation values and, under SLV, on the current volatility state. Hence a local-volatility projection may reproduce the vanilla surface while producing a different surrender policy. Rather than benchmarking SLV against an arbitrary constant-volatility model, we therefore ask how large this residual effect is when LV and SLV are constrained to a common non-flat volatility structure. We measure it through guarantee prices, fair insurance fees and surrender boundaries.
	
	The numerical methodology is related to the hybrid tree/finite-difference literature for stochastic-volatility and stochastic-interest-rate models. A hybrid construction for the Heston model is developed in \citet{BrianiCaramellinoZanette2017Heston}, while Heston--Hull--White-type extensions are considered in \citet{BrianiCaramellinoZanette2017}. For the trinomial construction itself, we follow the recombining moment-matching tree used in the variable-annuity literature, in particular in \citet{Molent2020} and in the trinomial-tree construction of \citet{GoudenegeMolentZanette2021}.
	
	The contribution is threefold. First, we propose a forward--backward
	hybrid tree/finite-difference implementation tailored to the
	leverage-calibrated SLV problem: a recombining moment-matching trinomial
	tree represents the state variable $V_t$, a conservative forward finite-volume
	step propagates the joint law used to calibrate the leverage function,
	and one-dimensional backward finite-difference solves implement
	continuation and optimal surrender after a decorrelating transformation
	in the correlated case. Second, we formulate a matched-marginal
	comparison between an LV model and a Heston SLV model for a surrenderable
	GMMB rider. For European payoffs, residual LV--SLV differences provide
	a numerical check of the marginal calibration, whereas persistent
	differences for surrenderable claims reflect the effect of conditional
	transition dynamics that is not fixed by the common local-volatility
	target. Third, we quantify these differences through guarantee prices,
	fair insurance fees and surrender regions, first with the fixed synthetic
	local-volatility surface of \citet{Wang2025} and then with a surface
	calibrated to EURO STOXX 50 option prices, using VSTOXX and joint
	equity--volatility information to inform the structural Heston parameters.
	
	The rest of the paper is organised as follows. Section \ref{sec:market} presents the LV and SLV financial market models, together with the pricing and calibration PDEs. Section \ref{sec:contract} describes the GMMB contract and its early-surrender formulation. Section \ref{sec:hybrid} describes the hybrid tree/finite-difference method. Section \ref{sec:numerics} reports the numerical experiments, including the synthetic convergence analysis, pricing tables, fair fees and surrender regions, followed by the market-data analysis. Section \ref{sec:conclusion} concludes. Numerical  details are collected in Appendix~\ref{app:numerical-details}.
	
	\section{The Heston SLV market model}
	\label{sec:market}
	
	Fix a finite horizon $T>0$. Let $(\Omega,\mathcal{F},(\mathcal{F}_t)_{0\leq t\leq T},\Q)$ be a filtered probability space satisfying the usual conditions. All prices are computed under the risk-neutral measure $\Q$. The risk-free rate $r_t$ and the dividend yield $q_t$ are deterministic functions of time. The money-market account is
	\[
	B_t=\exp\!\left(\int_0^t r_u\,\dd u\right),
	\]
	and, for $0\leq t\leq u\leq T$, we write
	\[
	D_r(t,u)=\exp\!\left(-\int_t^u r(z)\,\dd z\right),
	\qquad
	D_q(t,u)=\exp\!\left(-\int_t^u q(z)\,\dd z\right).
	\]
	Let $(S_t)_{0\leq t\leq T}$ denote the reference price-index or fund
	process before the deduction of the insurance fee, with initial level
	$S_0>0$.

	Let $(W_t)_{t\geq0}$ be a one-dimensional Brownian motion and let $\sigLV(t,s)>0$ be a deterministic local-volatility function. Under the local-volatility model the reference fund dynamics are
	\begin{equation}
		\frac{\dd S_t}{S_t}=\bigl(r_t-q_t\bigr)\dd t+\sigLV(t,S_t)\dd W_t.
		\label{eq:lv-sde}
	\end{equation}
	Throughout this section, lowercase $u$ denotes a generic contingent-claim value. The market model is identified by the corresponding generator and state variables. For a sufficiently regular claim value $u(t,s)$, define the LV generator by
	\begin{equation}
		\calA^{\mathrm{LV}}_t u
		=\bigl(r_t-q_t\bigr)s\partial_s u
		+\frac12 \sigLV^2(t,s)s^2\partial_{ss}u.
		\label{eq:lv-generator}
	\end{equation}
	The associated backward pricing PDE in the continuation region is then
	\begin{equation}
		\partial_t u(t,s)+\calA^{\mathrm{LV}}_t u(t,s)-r_t u(t,s)=0.
		\label{eq:lv-backward}
	\end{equation}
	The LV pricing equation is a backward terminal-value problem. For a terminal payoff $\Psi(S_T)$, the corresponding backward terminal condition is
	\begin{equation}
		u(T,s)=\Psi(s).
		\label{eq:lv-generic-terminal}
	\end{equation}
	If early surrender is allowed, this terminal condition is complemented by a pointwise obstacle condition at the admissible surrender dates. Thus the datum used by the LV pricing problem is an ordinary terminal payoff. In the numerical work below the LV model is used only through backward pricing solves; no forward density propagation is performed for the LV benchmark.
	
	The calibration of $\sigLV$ from European option prices is standard. If $C(K,T)$ denotes the time-zero price of a call option with strike $K$ and maturity $T$, then, under the usual smoothness and no-arbitrage assumptions, we denote the Dupire local volatility by $\sigDup$ and write
	\begin{equation}
		\sigDup^2(K,T)
		=\frac{2\left(\partial_T C(K,T)+q_T C(K,T)+\bigl(r_T-q_T\bigr)K\partial_K C(K,T)\right)}{K^2\partial_{KK}C(K,T)}.
		\label{eq:dupire}
	\end{equation}
	The local-volatility surface used in the numerical experiments is either prescribed synthetically or, in the market application, obtained beforehand from European option prices through a calibrated recombining market model that is subsequently represented as a dense local-volatility grid. Classical and learning-based approaches to local-volatility construction include \citet{Dupire1994}, \citet{DermanKani1994}, \citet{AchdouPironneau2005}, \citet{Crepey2003}, \citet{Wang2025} and \citet{MolentVellekoop2026}.
	
	The Heston SLV model augments the local-volatility dynamics with the Heston state variable $(V_t)_{t\geq0}$. Let $V_0>0$, $\kappa_V>0$, $\theta_V>0$ and $\omega>0$ denote, respectively, its initial level, mean-reversion speed, long-run level and diffusion coefficient. Let $\rho\in[-1,1]$ denote the instantaneous correlation between two Brownian motions $W^S$ and $W^V$, and let $L(t,s)>0$ denote the deterministic leverage function. The Heston SLV specification is
	\begin{alignat}{2}
		&\frac{\dd S_t}{S_t} &&=\bigl(r_t-q_t\bigr)\dd t+L(t,S_t)\sqrt{V_t}\dd W^S_t,
		\label{eq:slv-S}\\
		&\dd V_t &&=\kappa_V(\theta_V-V_t)\dd t+\omega\sqrt{V_t}\dd W^V_t,
		\label{eq:slv-V}\\
		&\dd\langle W^S,W^V\rangle_t &&=\rho\,\dd t.
		\label{eq:slv-corr}
	\end{alignat}
	In the pure Heston model, obtained when $L\equiv1$, $V_t$ represents the instantaneous variance. In the SLV specification, instead, the instantaneous volatility of the reference fund is $L(t,S_t)\sqrt{V_t}$, and the corresponding instantaneous variance rate is $L^2(t,S_t)V_t$.
	
	For a sufficiently smooth claim value $u(t,s,v)$, the SLV generator is
	\begin{equation}
		\begin{aligned}
			\calA^{\mathrm{SLV}}_t u
			=&\;\bigl(r_t-q_t\bigr)s\partial_s u
			+\kappa_V(\theta_V-v)\partial_v u \\
			&+\frac12L^2(t,s)v s^2\partial_{ss}u
			+\rho\omega L(t,s)v s\partial_{sv}u
			+\frac12\omega^2v\partial_{vv}u.
		\end{aligned}
		\label{eq:slv-generator}
	\end{equation}
	The backward pricing equation in the continuation region is
	\begin{equation}
		\partial_t u(t,s,v)+\calA^{\mathrm{SLV}}_t u(t,s,v)-r_t u(t,s,v)=0.
		\label{eq:slv-backward}
	\end{equation}
	For a payoff $\Psi$ depending on the terminal fund level, the corresponding backward terminal condition is
	\begin{equation}
		u(T,s,v)=\Psi(s).
		\label{eq:slv-generic-terminal}
	\end{equation}
	This is the case for the GMMB payoff specified in Section~\ref{sec:contract}, since the contractual cash flow is written on the fee-deducted account value. Nevertheless, the SLV continuation value generally depends on $v$ through the dynamics of $V_t$. As in the LV case, early surrender is treated by imposing the obstacle condition at the admissible surrender dates.
	
	Let $p(t,s,v)$ denote the joint probability density of
	$(S_t,V_t)$ under $\Q$. Its forward Kolmogorov equation is
	\begin{equation}
		\begin{aligned}
			\partial_t p
			=&-\partial_s\{\bigl(r_t-q_t\bigr)s p\}
			-\partial_v\{\kappa_V(\theta_V-v)p\} \\
			&+\frac12\partial_{ss}\{L^2(t,s)v s^2p\} \\
			&+\partial_{sv}\{\rho\omega L(t,s)v s p\}
			+\frac12\partial_{vv}\{\omega^2 v p\}.
		\end{aligned}
		\label{eq:slv-forward}
	\end{equation}
	The SLV forward calibration problem is initialized at the deterministic state $(S_0,V_0)$. Writing $\delta$ for the Dirac distribution, the initial condition is
	\begin{equation}
		p(0,s,v)=\delta(s-S_0)\,\delta(v-V_0).
		\label{eq:slv-forward-initial}
	\end{equation}
	The initial condition \eqref{eq:slv-forward-initial} is used in the SLV forward calibration pass to propagate the joint law and compute the conditional moment entering the leverage update.
	The leverage function is chosen so that the SLV model reproduces the target local variance rate by Markovian projection. If $\sigLV$ denotes the target local volatility, the calibration condition is
	\begin{equation}
		L^2(t,s)\,\E^{\Q}\left[V_t\mid S_t=s\right]=\sigLV^2(t,s),
		\qquad
		L(t,s)=\frac{\sigLV(t,s)}{\sqrt{\E^{\Q}[V_t\mid S_t=s]}}.
		\label{eq:leverage-condition}
	\end{equation}
	This identity is the link between the forward calibration step and the backward pricing step. In practice, the conditional expectation in \eqref{eq:leverage-condition} is evaluated on the numerical grid from the joint distribution propagated by the hybrid method.
	
	If the LV and SLV specifications use the same deterministic discounting and
	induce the same distribution of $S_t$ at every fixed date, they assign the same
	value to European payoffs depending only on $S_t$. This marginal consistency
	does not extend, in general, to surrenderable claims. Their value is determined
	by a Snell envelope and therefore depends on conditional transition dynamics
	across surrender dates, which are not fixed by the collection of one-date
	marginals. This distinction provides the conceptual basis for the numerical
	comparison below: terminal-only payoffs test the marginal projection, whereas
	the surrender option probes conditional continuation dynamics.

	\section{The GMMB contract}
	\label{sec:contract}
	
	We consider the put-like guarantee component of a variable annuity contract.
	The rider has maturity $T$, guaranteed amount $G>0$, initial account value
	$F_0>0$, continuous insurance fee $c\geq0$ and surrender-adjustment parameter
	$\kappa_s\geq0$. The corresponding policy account value is represented as
	\begin{equation}
		F_t^{(c)}=F_0\frac{S_t}{S_0}e^{-ct}.
		\label{eq:account-value}
	\end{equation}
	Other account conventions are possible. We retain the account convention of
	\citet{Shen2016}, $F_t^{(c)}\propto e^{-ct}S_t$, but extend their setting to
	deterministic dividend yields and time-dependent interest rates. In
	\citet{Shen2016} the reference asset is non-dividend-paying and the
	risk-free rate is constant; here $q_t$ enters the dynamics of the reference
	asset and dividends are not separately reinvested in the policy account.
	
	Throughout this section, uppercase $U$ is reserved for GMMB guarantee values. The superscripts $\mathrm{LV}$ and $\mathrm{SLV}$ identify the market model, while the subscript $\mathrm{Sur}$ is used when the surrender option is present. Under the valuation decomposition adopted here, surrender terminates the guarantee rider and is not modelled as surrender of the entire investment account.
	
	We first consider the GMMB guarantee without surrender, for which the guarantee payoff is received only at maturity $T$. Its terminal payoff is
	\begin{equation}
		\Phi_T(S_T;F_0,c)=\left(G-F_T^{(c)}\right)^+
		=\left(G-F_0\frac{S_T}{S_0}e^{-cT}\right)^+.
		\label{eq:terminal-guarantee-payoff}
	\end{equation}
	Thus, the generic terminal payoff $\Psi$ introduced in Section~\ref{sec:market}
	is here specified as
	\[
	\Psi(s)=\Phi_T(s;F_0,c).
	\]
	In the SLV model, the state also contains $V_t$ and the corresponding time-$t$ value is
	\begin{equation}
		U^{\mathrm{SLV}}(t,s,v)
		=\E^{\Q}\left[D_r(t,T)\Phi_T(S_T;F_0,c)\mid S_t=s,\,V_t=v\right].
		\label{eq:european-gmmb-slv-mean}
	\end{equation}
	The corresponding LV value is obtained by conditioning only on the current fund level and using the LV dynamics of Section~\ref{sec:market}. When the two models are calibrated to the same projected local-volatility structure, these terminal-only values should be close up to calibration and numerical errors.
	
	We next consider the same guarantee with a surrender option. Let $\mathcal D\subset(0,T]$ denote the admissible surrender dates, with $T\in\mathcal D$. The contractual payoff at a surrender time is
	\begin{equation}
		\Phi(t,S_t)=
		\begin{cases}
			\displaystyle
			\left(G-e^{-\kappa_s(T-t)}F_t^{(c)}\right)^+, & t<T,\\[1ex]
			\Phi_T(S_T;F_0,c), & t=T.
		\end{cases}
		\label{eq:surrender-contractual-payoff}
	\end{equation}
	The exponential factor $e^{-\kappa_s(T-t)}$ represents the time-dependent surrender adjustment and converges to one at maturity. Throughout the paper, $\kappa_s$ is therefore referred to as the surrender-adjustment parameter. It is a continuously compounded adjustment rate that modifies the account value entering the guaranteed shortfall; it is not a conventional penalty applied to the cash value of the policy account. At time $t$, the corresponding fractional reduction of the account entering the rider payoff is $1-e^{-\kappa_s(T-t)}$. Under this convention, increasing $\kappa_s$ increases the immediate value of the guarantee rider. Accordingly, $\kappa_s$ should not be interpreted as a conventional surrender penalty applied to the policy account. Exercise terminates only the guarantee rider, while the policy account remains in force until maturity.
	
	For $0\leq t\leq T$, let $\mathcal{T}^{\mathcal D}_{t,T}$ be the set of $\Q$-stopping times taking values in $\mathcal D\cap[t,T]$. In the SLV model, the guarantee value with surrender is the Snell envelope
	\begin{equation}
		U^{\mathrm{SLV}}_{\mathrm{Sur}}(t,s,v)
		=\esssup_{\tau\in\mathcal{T}^{\mathcal D}_{t,T}}
		\E^{\Q}\left[D_r(t,\tau)\Phi(\tau,S_\tau)\mid S_t=s,\,V_t=v\right].
		\label{eq:snell-envelope}
	\end{equation}
	Continuous surrender corresponds to $\mathcal D=(0,T]$. For the numerical approximation, let $N$ be the number of admissible surrender dates and use the finite set $\mathcal D_N=\{\tau_1,\ldots,\tau_N=T\}$ with $0<\tau_1<\cdots<\tau_N=T$; $t=0$ remains a continuation date. In the reported experiments these surrender dates coincide with the positive nodes of the computational time grid introduced in Section~\ref{sec:hybrid}. The optimisation is a risk-neutral rational-surrender convention and should not be interpreted as an empirical lapse model.
	
	For continuous surrender on $(0,T]$, the value $U^{\mathrm{SLV}}_{\mathrm{Sur}}(t,s,v)$ satisfies
	\begin{equation}
		\max\left\{
		\partial_t U^{\mathrm{SLV}}_{\mathrm{Sur}}
		+\calA^{\mathrm{SLV}}_t U^{\mathrm{SLV}}_{\mathrm{Sur}}
		-r_tU^{\mathrm{SLV}}_{\mathrm{Sur}},\;
		\left(G-e^{-\kappa_s(T-t)}F_0\frac{s}{S_0}e^{-ct}\right)^+
		-U^{\mathrm{SLV}}_{\mathrm{Sur}}
		\right\}=0,
		\label{eq:slv-obstacle}
	\end{equation}
	with terminal condition
	\begin{equation}
		U^{\mathrm{SLV}}_{\mathrm{Sur}}(T,s,v)=\Phi_T(s;F_0,c).
		\label{eq:slv-terminal}
	\end{equation}
	The immediate surrender payoff does not depend directly on $v$, but the continuation value does. Therefore, in the SLV model the surrender boundary generally depends on the current value of $V_t$. For a discrete set of admissible surrender dates, the pricing PDE holds between successive dates and the obstacle projection is applied only on $\mathcal D$.
	
For a given initial account value $F_0$, we follow the rider decomposition in
\citet{Shen2016}: surrender terminates the guarantee component, while the
fee-deducted account component is valued to maturity. Let $X_0$ denote the
initial state of the market model, namely $X_0=S_0$ under LV and
$X_0=(S_0,V_0)$ under SLV. For $M\in\{\mathrm{LV},\mathrm{SLV}\}$, the fair
insurance fee $c_M^*$ is defined by
\begin{equation}
	F_0 = \E^{\Q}\left[D_r(0,T)F_T^{(c_M^*)}\right]
	+ U^M_{\mathrm{Sur}}\left(0,X_0;F_0,c_M^*,\kappa_s\right),
	\label{eq:fair-fee-definition}
\end{equation}
where $U^M_{\mathrm{Sur}}$ denotes the surrenderable guarantee value under
model $M$. Since $F_t^{(c)}$ is proportional to the traded price exposure and
the fee is deterministic, the first term in
\eqref{eq:fair-fee-definition} is $F_0e^{-c_M^*T}D_q(0,T)$. Under constant $q$
this reduces to $F_0e^{-(q+c_M^*)T}$. Thus the account component and its
deterministic fee deduction remain in the maturity value even when the
guarantee rider has been surrendered.

We first establish a model-independent monotonicity property that will be used
to interpret the numerical results in both experiments. The argument depends
only on the contractual payoff, the positivity of the fee-deducted account and
discount factor, and the fact that the admissible surrender dates do not depend
on $\kappa_s$.
	
	\begin{prop}[Monotonicity with respect to the surrender-adjustment parameter]
		\label{prop:fee-monotone}
		Let $G>0$ and, for every fixed $c\geq0$, let
		$F^{(c)}=(F_t^{(c)})_{0\leq t\leq T}$ be an adapted, strictly positive
		fee-deducted account process that does not depend on $\kappa_s$. Assume that
		\[
		0<D_r(0,t)\leq \overline{D}<\infty,
		\qquad 0\leq t\leq T,
		\]
		and let $\mathcal T^{\mathcal D}_{0,T}$ be a nonempty set of admissible
		stopping times, independent of $\kappa_s$, with $T\in\mathcal D$.
		For $c\geq0$ and $\kappa_s\geq0$, define
		\[
		\Phi_{\kappa_s}(t)
		=
		\left(G-e^{-\kappa_s(T-t)}F_t^{(c)}\right)^+,
		\qquad 0\leq t\leq T,
		\]
		and
		\[
		U_{\mathrm{Sur}}(0;c,\kappa_s)
		=
		\sup_{\tau\in\mathcal T^{\mathcal D}_{0,T}}
		\E^{\Q}\!\left[D_r(0,\tau)\Phi_{\kappa_s}(\tau)\right].
		\]
		\begin{enumerate}
			\item[(i)] For every fixed $c\geq0$, the map
			\[
			\kappa_s\longmapsto U_{\mathrm{Sur}}(0;c,\kappa_s)
			\]
			is nondecreasing.
			
			\item[(ii)] Let $\kappa_2>\kappa_1$, and suppose that an optimal
			stopping time $\tau^*\in\mathcal T^{\mathcal D}_{0,T}$ exists for
			$\kappa_1$. If
			\begin{equation}
				\label{eq:nondeg}
				\Q(\mathcal E)>0,
				\qquad
				\mathcal E
				:=
				\left\{
				\tau^*<T,\;
				G-e^{-\kappa_1(T-\tau^*)}F_{\tau^*}^{(c)}>0
				\right\},
			\end{equation}
			then
			\[
			U_{\mathrm{Sur}}(0;c,\kappa_2)
			>
			U_{\mathrm{Sur}}(0;c,\kappa_1).
			\]
			
			\item[(iii)] Let
			\[
			H(c,\kappa_s)
			=
			\Gamma(c)+U_{\mathrm{Sur}}(0;c,\kappa_s)-F_0,
			\]
			where the account component $\Gamma(c)$ does not depend on $\kappa_s$.
			Suppose that, for every $\kappa_s$ under consideration, the map
			$c\mapsto H(c,\kappa_s)$ is continuous and strictly decreasing and
			admits a zero $c^*(\kappa_s)$. Then
			\[
			\kappa_s\longmapsto c^*(\kappa_s)
			\]
			is nondecreasing. Moreover, let $\kappa_2>\kappa_1$. If, at
			$c=c^*(\kappa_1)$, an optimal stopping time
			$\tau^*\in\mathcal T^{\mathcal D}_{0,T}$ exists for
			$U_{\mathrm{Sur}}(0;c,\kappa_1)$ and \eqref{eq:nondeg} holds for this
			$\tau^*$, then
			\[
			c^*(\kappa_2)>c^*(\kappa_1).
			\]
		\end{enumerate}
	\end{prop}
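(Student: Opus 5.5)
The plan is to argue pathwise for a fixed stopping time and then pass to the supremum.

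\emph{Part (i).} Fix $c\geq0$ and $0\le\kappa_1\le\kappa_2$. For every $t\in[0,T]$ we have $T-t\geq0$, so $e^{-\kappa_2(T-t)}\le e^{-\kappa_1(T-t)}$. Since $F^{(c)}_t>0$, this gives $G-e^{-\kappa_2(T-t)}F_t^{(c)}\geq G-e^{-\kappa_1(T-t)}F_t^{(c)}$. Because $x\mapsto x^+$ is nondecreasing, it follows that $\Phi_{\kappa_2}(t)\geq\Phi_{\kappa_1}(t)$ pathwise.

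Now fix any admissible $\tau$. The discount factor $D_r(0,\tau)$ is positive and bounded by $\overline D$, and $0\le\Phi_\kappa\le G$. Hence both expectations are finite, and
\[
\E^{\Q}[D_r(0,\tau)\Phi_{\kappa_2}(\tau)]\geq\E^{\Q}[D_r(0,\tau)\Phi_{\kappa_1}(\tau)].
\]
The admissible set does not depend on $\kappa_s$, so taking the supremum over the same set gives (i).

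\emph{Part (ii).} Use the optimal $\tau^*$ for $\kappa_1$ as a candidate, possibly suboptimal, for $\kappa_2$. This yields
\[
U_{\mathrm{Sur}}(0;c,\kappa_2)-U_{\mathrm{Sur}}(0;c,\kappa_1)\geq \E^{\Q}\bigl[D_r(0,\tau^*)\bigl(\Phi_{\kappa_2}(\tau^*)-\Phi_{\kappa_1}(\tau^*)\bigr)\bigr].
\]
The integrand is nonnegative by (i). On $\mathcal E$ we have $\tau^*<T$, so $e^{-\kappa_2(T-\tau^*)}<e^{-\kappa_1(T-\tau^*)}$ strictly. Since $F^{(c)}_{\tau^*}>0$, the inner quantity strictly increases. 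Because $G-e^{-\kappa_1(T-\tau^*)}F^{(c)}_{\tau^*}>0$ on $\mathcal E$, the positive part is the identity there. Therefore
\[
\Phi_{\kappa_2}(\tau^*)-\Phi_{\kappa_1}(\tau^*)\geq\bigl(e^{-\kappa_1(T-\tau^*)}-e^{-\kappa_2(T-\tau^*)}\bigr)F^{(c)}_{\tau^*}>0 \quad\text{on }\mathcal E.
\]
Multiplying by $D_r(0,\tau^*)>0$ gives a nonnegative random variable that is strictly positive on a set of positive probability. Its expectation is therefore strictly positive, which proves (ii). This step is the main point to handle with care: strict positivity on $\mathcal E$ together with $\Q(\mathcal E)>0$ is exactly what is needed to get a strictly positive expectation, and the nondegeneracy condition \eqref{eq:nondeg} supplies it.

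\emph{Part (iii).} Let $\kappa_1<\kappa_2$ and set $c_1=c^*(\kappa_1)$. By (i), $H(c_1,\kappa_2)\geq H(c_1,\kappa_1)=0$, since $\Gamma$ does not depend on $\kappa_s$. Suppose, for contradiction, that $c^*(\kappa_2)<c_1$. Since $c\mapsto H(c,\kappa_2)$ is strictly decreasing, $0=H(c^*(\kappa_2),\kappa_2)>H(c_1,\kappa_2)\geq0$, which is impossible. Hence $c^*(\kappa_2)\geq c_1$; strict monotonicity also makes the zero unique, so $c^*$ is well defined.

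Under the additional hypotheses at $c=c_1$, part (ii) gives $H(c_1,\kappa_2)>0$. If $c^*(\kappa_2)=c_1$, then $H(c_1,\kappa_2)=0$, a contradiction. Therefore $c^*(\kappa_2)>c_1$. Continuity of $H$ is not needed for this comparison; it only guarantees existence of the zero, which is assumed anyway.
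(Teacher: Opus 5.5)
Your proof is correct and follows essentially the same route as the paper's. For (i) you use the pathwise comparison $\Phi_{\kappa_2}(t)\geq\Phi_{\kappa_1}(t)$ and take the supremum over the same admissible set; for (ii) you use the $\kappa_1$-optimal $\tau^*$ as a candidate for $\kappa_2$, with a strict gain on $\mathcal E$; for (iii) you argue by contradiction using only the strict decrease of $H(\cdot,\kappa_2)$.
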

	
	The proof is given in Appendix~\ref{app:proof-fee-monotonicity}.

	In the LV and Heston SLV specifications considered in this paper, the account component in part~(iii) is
	\[
	\Gamma(c)=F_0e^{-cT}D_q(0,T),
	\]
	and the admissible surrender set is fixed independently of $\kappa_s$. Proposition~\ref{prop:fee-monotone} therefore applies directly to both models. In particular, under the stated monotonicity condition for the fair-fee equation, increasing the surrender-adjustment parameter cannot decrease the fair insurance fee. This contractual ordering will be used in Section~\ref{sec:numerics} to interpret the fair-fee calculations.

	\section{Hybrid forward calibration and backward valuation}
	\label{sec:hybrid}
	
	The local-volatility model is used as a one-factor benchmark and is valued by a one-dimensional finite-difference discretisation of the backward equation \eqref{eq:lv-backward}, with the surrender obstacle imposed at the admissible surrender dates when appropriate; the numerical specification is summarised in Subsection~\ref{app:lv-benchmark} of Appendix~\ref{app:numerical-details}. The main numerical construction concerns the Heston SLV model. It is organised in two successive stages. A forward hybrid pass calibrates the leverage surface to the prescribed local-volatility marginals. Once that surface has been stored on the full time--stock grid, a backward hybrid pass prices the GMMB by dynamic programming. The same recombining Heston tree is used in both stages, while conditional evolution in the remaining spatial coordinate is treated by one-dimensional PDE solves.
	
	The forward pass constitutes the core calibration stage of the numerical
	method and is therefore described first. In the correlated branch the mixed Brownian component is first removed by a change of variable; the state variable $V_t$ is then represented by a trinomial tree; conditional forward equations propagate the joint law; and the resulting density is projected onto the stock grid to evaluate the conditional mean of $V_t$ entering the Markovian-projection identity.
	
	\subsection{Decorrelation and transformed SDE}
	\label{subsec:rho-nonzero}
	
	For the correlated case $\rho\neq0$, let $W$ and $Z$ be independent
	Brownian motions and write
	\begin{equation}
		\dd W^V_t=\dd W_t,
		\qquad
		\dd W^S_t=\rho\,\dd W_t+\barrho\,\dd Z_t,
		\qquad
		\barrho=\sqrt{1-\rho^2}.
		\label{eq:brownian-decomp-main}
	\end{equation}
	Define
	\begin{equation}
		g(t,S)=\int_{S_0}^{S}\frac{\dd\xi}{\xi L(t,\xi)},
		\qquad
		Y_t=V_t-\frac{\omega}{\rho}g(t,S_t).
		\label{eq:method2-transform-main}
	\end{equation}
	Throughout this section, $g_t=\partial_t g$ and $L_S=\partial_S L$.
	Since
	\[
	g_S(t,S)=\frac{1}{S L(t,S)},
	\qquad
	g_{SS}(t,S)
	=
	-\frac{L(t,S)+S L_S(t,S)}
	{S^2L^2(t,S)},
	\]
	It\^o's formula cancels the Brownian component shared by $S$ and $V$
	and yields
	\begin{equation}
		\dd Y_t
		=
		\mu_Y(t,S_t,V_t)\dd t
		-\frac{\omega\barrho}{\rho}\sqrt{V_t}\dd Z_t,
		\label{eq:Y-dynamics-main}
	\end{equation}
	where
	\begin{equation}
		\begin{aligned}
			\mu_Y(t,S,V)
			=&\;
			\kappa_V(\theta_V-V)
			-\frac{\omega}{\rho}g_t(t,S)
			-\frac{\omega}{\rho}
			\frac{r_{t}-q_{t}}{L(t,S)}
			\\
			&+
			\frac{\omega}{2\rho}
			\left(L(t,S)+S L_S(t,S)\right)V.
		\end{aligned}
		\label{eq:muY-main}
	\end{equation}
	For a fixed value $v$ of $V_t$, define
	\begin{equation}
		\nu_Y(v)
		=
		\frac12
		\frac{\omega^2(1-\rho^2)}{\rho^2}v.
		\label{eq:nuY-main}
	\end{equation}
	
	The stock level associated with a transformed state $(y,v)$ is
	characterised by
	\begin{equation}
		g(t,S)
		=
		\frac{\rho}{\omega}(v-y).
		\label{eq:inverse-transform-main}
	\end{equation}
	Whenever this equation has a unique solution on the numerical stock
	domain, we denote it by $S(t,y,v)$.
	
	Let $p(t,y,v)$ denote the joint density of $(Y_t,V_t)$ with respect to
	the transformed coordinates $(y,v)$. The transformation removes the
	mixed second-order derivative from the joint forward equation. In the
	hybrid splitting, the evolution of $V_t$ is represented by the
	recombining tree described in Section~\ref{subsec:variance-tree}, while,
	during the spatial substep and for a fixed value $v$ of $V_t$, the
	transformed-coordinate component is advanced by
	\begin{equation}
		\partial_t p(t,y;v)
		=
		-\partial_y\!\left(
		\mu_Y\bigl(t,S(t,y,v),v\bigr)p(t,y;v)
		\right)
		+\nu_Y(v)\partial_{yy}p(t,y;v).
		\label{eq:forward-rho-nonzero-main}
	\end{equation}
	
	When $\rho=0$, no decorrelating transformation is required. We use
	\begin{equation}
		x=\log S
		\label{eq:log-coordinate-main}
	\end{equation}
	and, for a fixed value $v$ of $V_t$, define
	\begin{equation}
		\nu(t,x;v)
		=
		\frac12L^2(t,e^x)v,
		\qquad
		\mu(t,x;v)
		=
		r_{t}-q_{t}-\nu(t,x;v).
		\label{eq:rho0-coefficients-main}
	\end{equation}
	The evolution of $V_t$ is again represented by the CIR tree, while,
	during the spatial substep and for a fixed value $v$ of $V_t$, the
	log-stock component is advanced directly by
	\begin{equation}
		\partial_t p
		=
		-\partial_x(\mu p)
		+\partial_{xx}(\nu p).
		\label{eq:forward-rho0-main}
	\end{equation}
	
	Hence, in both branches, the hybrid splitting reduces the continuous
	spatial part of the forward calibration problem to a one-dimensional
	transport--diffusion equation coupled with the discrete Heston tree
	evolution.

	\subsection{Recombining trinomial tree for the Heston state variable}
\label{subsec:variance-tree}
\label{subsec:heston-tree}

Only the abstract ingredients of the variance discretisation are needed in the
main text. The time grid, the deterministic rate increments and the Heston tree
are constructed before the forward and backward sweeps. The CIR state variable
$V_t$ is approximated by a time-homogeneous recombining trinomial tree rooted at
$V_0$. We denote the variance nodes by $v_j$, the three children of a parent
node $v_j$ by the index set $\mathcal C(j)$, and the corresponding transition
probabilities by $\pi_{j\to h}$, $h\in\mathcal C(j)$. The tree matches the first
two exact conditional moments of the CIR process over one time interval. Its
child maps and transition probabilities are stored once and are then used in
both directions: they redistribute probability mass during the forward
calibration and form conditional expectations during the backward valuation.
The square-root lattice, moment-matching equations, admissibility search and
finite-tree treatment are described in Subsection~\ref{app:trinomial-tree} of Appendix~\ref{app:numerical-details}.

\subsection{Forward calibration of the leverage surface}
\label{subsec:forward-leverage}

The forward calibration follows the chronological order of the numerical
implementation, while the detailed formulas are deferred to
Appendix~\ref{app:numerical-details}. Let $S_\ell$, $\ell=0,\ldots,N_S$, be the
fixed stock grid on which the target local volatility and the leverage surface
are stored. In the correlated branch, let $y_m$, $m=0,\ldots,N_Y$, denote the
transformed grid, let $\mathcal I_Y$ denote linear interpolation on that grid,
and let $P^i_{j,m}$ represent the forward density-like array at time $t_i$,
variance node $v_j$ and transformed node $y_m$. We denote by
$\varepsilon_V>0$ the positive floor used in the conditional-variance and
leverage updates. Since
$g(0,S_0)=0$, the transformed initial state is $Y_0=V_0$; the initial mass is
therefore placed at the Heston-tree root and at the closest interior
$Y$-grid node. The leverage slice used to start the first time interval is
extended over the stock grid according to
\[
L(0,S_\ell)
=
\frac{\sigLV(0,S_\ell)}
{\sqrt{\max\{V_0,\varepsilon_V\}}},
\qquad \ell=0,\ldots,N_S.
\]

Assume that the accepted density $P^i$ and leverage slice $L^i$ are available
at time $t_i$. The leverage slice first determines the auxiliary stock-grid
arrays $g^i$, $g_t^i$ and $L^i+S L_S^i$. At each pair $(y_m,v_j)$, the inverse
relation \eqref{eq:inverse-transform-main} is then used to recover the stock
value $S(t_i,y_m,v_j)$. Interpolating the leverage-dependent arrays at these
recovered stock values provides the drift in
\eqref{eq:forward-rho-nonzero-main}; the diffusion coefficient is fixed by the
current variance node. Thus the accepted leverage information at $t_i$
determines all coefficients needed to propagate the density over
$[t_i,t_{i+1}]$.

Conditional on a node $v_j$, the one-dimensional transport--diffusion equation
is advanced by a conservative fully implicit finite-volume scheme. The interval
may be divided into equal implicit substeps according to the local drift
indicator described in Subsection~\ref{app:forward-calibration} of Appendix~\ref{app:numerical-details}. The drift and diffusion coefficients remain
frozen during these substeps. Denoting by $\widetilde P^{i+1}_{j,m}$ the output
of this factorwise solve before the tree transition, the Heston-tree
redistribution is
\begin{equation}
P^{i+1}_{h,m}
=
\sum_{j:\,h\in\mathcal C(j)}
\pi_{j\to h}\,\widetilde P^{i+1}_{j,m}.
\label{eq:forward-tree-redistribution-main}
\end{equation}
This produces the joint density at the new time level. The forward PDE solve
and the tree redistribution are performed once over each time interval.

The new leverage slice is then obtained by projecting the propagated density
onto the fixed stock grid. Given a provisional new-time map
$g^{i+1,[n]}$ at iteration $n$, the transformed coordinate associated with the
pair $(S_\ell,v_j)$ is
\[
y_j^{i+1,[n]}(S_\ell)
=
v_j-\frac{\omega}{\rho}g^{i+1,[n]}(S_\ell).
\]
The density is evaluated at this point by interpolation on the transformed
grid. The resulting approximation of the conditional mean is
\begin{equation}
\widehat v^{i+1,[n]}(S_\ell)
=
\frac{
\sum_j v_j\,
\mathcal I_Y[P^{i+1}_{j,\cdot}]
\bigl(y_j^{i+1,[n]}(S_\ell)\bigr)
}{
\sum_j
\mathcal I_Y[P^{i+1}_{j,\cdot}]
\bigl(y_j^{i+1,[n]}(S_\ell)\bigr)
}.
\label{eq:conditional-variance-main}
\end{equation}
The Markovian-projection identity gives the direct update
\begin{equation}
\widehat L^{i+1,[n+1]}(S_\ell)
=
\frac{\sigLV(t_{i+1},S_\ell)}
{\sqrt{\max\{\widehat v^{i+1,[n]}(S_\ell),\varepsilon_V\}}}.
\label{eq:discrete-leverage-update}
\end{equation}
The first projection uses $g^{i+1,[0]}:=g^i$ and accepts the corresponding
direct update as $L^{i+1,[1]}=\widehat L^{i+1,[1]}$. The map and the auxiliary
arrays are then rebuilt. For subsequent projection passes, the propagated
density $P^{i+1}$ remains fixed and only the new-time map, conditional moment
and leverage slice are updated. With relaxation parameter $\alpha\in(0,1]$,
the iteration is
\[
L^{i+1,[n+1]}
=
\alpha\widehat L^{i+1,[n+1]}
+(1-\alpha)L^{i+1,[n]},
\qquad n\geq1.
\]
Once the last projection pass has been accepted, the final leverage slice and
its auxiliary arrays are stored and the algorithm advances to the next time
interval. Repeating this procedure up to maturity produces the calibrated
leverage surface on the full time--stock grid.

When $\rho=0$, the same calibration logic is implemented directly on the
log-stock grid. The construction of $g$ and $g_t$, the inversion $Y\mapsto S$
and the interpolation back to the stock grid are not needed. After the
factorwise log-stock solve and the same Heston-tree redistribution, the
conditional mean is read directly from the propagated array,
\[
\widehat v^{i+1}(S_m)
=
\frac{\sum_j v_jP^{i+1}_{j,m}}
{\sum_j P^{i+1}_{j,m}},
\qquad S_m=e^{x_m}.
\]
The leverage update is then identical to
\eqref{eq:discrete-leverage-update}. Subsections~\ref{app:spatial-grids}--\ref{app:independent-branch} of Appendix~\ref{app:numerical-details} give the sequential construction of the grids, the
leverage-dependent arrays, the inverse map, the finite-volume coefficients,
the substep rule, the low-density fallback and the projection iteration.

\subsection{Backward pricing on the calibrated leverage surface}
\label{subsec:backward-step}

Once the forward calibration has reached maturity, the leverage surface and all
associated map and derivative arrays are frozen. The backward sweep uses the
same Heston tree but traverses the splitting in reverse order. For the
correlated branch, the continuation equation conditional on $V_t=v$ is
\begin{equation}
\partial_t u
+\mu_Y\bigl(t,S(t,y,v),v\bigr)\partial_yu
+\nu_Y(v)\partial_{yy}u-r_tu=0.
\label{eq:backward-rho-nonzero-main}
\end{equation}
The value array is initialised at maturity by
\begin{equation}
U^{N_t}_{j,m}
=
\left(
G-F_0\frac{S(T,y_m,v_j)}{S_0}e^{-cT}
\right)^+.
\label{eq:terminal-discrete}
\end{equation}

At a generic backward time level, the values at the three child variance nodes
are first averaged with the stored tree probabilities,
\begin{equation}
R^i_{j,m}
=
\sum_{h\in\mathcal C(j)}
\pi_{j\to h}U^{i+1}_{h,m}.
\label{eq:tree-expectation-main}
\end{equation}
This vector provides the terminal datum for the conditional spatial solve over
$[t_i,t_{i+1}]$. The stored leverage slice at $t_i$ is used to recover the
stock values represented by the spatial grid and to evaluate the drift and
diffusion coefficients. As in the forward calculation, the time interval may
be split into equal implicit substeps according to the local drift indicator.
The corresponding financial boundary values are imposed in the recovered stock
coordinate, with different low-stock conditions for the terminal-only and
surrenderable guarantees as detailed in Subsection~\ref{app:backward-pricing} of Appendix~\ref{app:numerical-details}.

Let $\mathcal P^{\rm FD}_{i,j}$ denote the resulting undiscounted implicit
finite-difference propagation operator at node $v_j$. After the conditional
spatial solve, the exact deterministic discount factor is applied, so that
\begin{equation}
\widetilde U^i_{j,\cdot}
=
D_r(t_i,t_{i+1})\,
\mathcal P^{\rm FD}_{i,j}\!\left(R^i_{j,\cdot}\right).
\label{eq:backward-continuation}
\end{equation}
At an admissible surrender date, the continuation value is then projected
pointwise onto the immediate surrender payoff,
\begin{equation}
U^i_{j,m}
=
\max\left\{
\widetilde U^i_{j,m},
\left(
G-e^{-\kappa_s(T-t_i)}F_0
\frac{S(t_i,y_m,v_j)}{S_0}e^{-ct_i}
\right)^+
\right\}.
\label{eq:discrete-obstacle}
\end{equation}
For a guarantee without surrender, this projection is omitted. Under the
reported discrete-time convention, it is imposed at every positive
pre-maturity grid date and at maturity through the terminal payoff, but not at
$t_0=0$. The resulting value array becomes the input to the preceding backward
time level. Repetition down to time zero completes the valuation and, by
locating the transition between equality with the obstacle and strict
continuation at each variance node, also produces the state-dependent surrender
boundary.

For $\rho=0$, the same tree/finite-difference recursion is performed directly
in $x=\log S$, with $S=e^x$ and the coefficients in
\eqref{eq:rho0-coefficients-main}. Subsection~\ref{app:backward-pricing} of Appendix~\ref{app:numerical-details} gives the detailed order of the tree expectation, coefficient and
substep selection, boundary treatment, implicit solve, discounting and obstacle
projection. Because the leverage calibration is a nonlinear projection of a
numerically represented conditional law, convergence is assessed in
Section~\ref{sec:numerics} by recalibrating the leverage surface on increasingly
refined grids before comparing the resulting GMMB values.

\section{Numerical results}
	\label{sec:numerics}
	
	This section reports two numerical experiments for the guaranteed minimum maturity benefit with surrender option. The contract is valued as the guarantee-rider component only, consistently with the decomposition of the variable annuity into the account component and the embedded put-like guarantee. Accordingly, ``surrender'' below refers to surrender of this rider and not to lapse of the entire account. We first compare the local-volatility model with the SLV model equipped with its calibrated leverage function in a controlled synthetic setting, using the leverage surface produced by the forward calibration pass of the hybrid method. The contract, market, Heston SLV and numerical parameters for both experiments are reported in Table~\ref{tab:numerical-inputs}, and Figure~\ref{fig:slv-leverage-function} displays the synthetic local-volatility surface and the corresponding calibrated SLV leverage function. Subsection~\ref{subsec:real-market-eurostoxx} then repeats the comparison using EURO STOXX 50 market data. Recall that $N_t$ denotes the number of time intervals. Let $N_S$ denote the number of log-spot intervals used for the LV benchmark and leverage grid, and let $N_Y$ denote the number of intervals on the transformed SLV PDE grid. For the reported correlated calculations we set $N_Y=N_S$. In the convergence study, $N_{\mathrm{grid}}$ denotes the common refinement level
	\[
	N_{\mathrm{grid}}:=N_t=N_S=N_Y.
	\]
	
	\begin{table}[H]
		\centering
		\scriptsize
		\renewcommand{\arraystretch}{1.08}
		\setlength{\tabcolsep}{4pt}
		\begin{tabularx}{\textwidth}{@{}l >{\raggedright\arraybackslash}p{0.28\textwidth} >{\raggedright\arraybackslash}X >{\raggedright\arraybackslash}X@{}}
			\toprule
			Symbol & Meaning & Synthetic data & Market data \\
			\midrule
			\multicolumn{4}{@{}l}{\textit{Contract parameters}} \\
			$G$ & Guarantee level & $100$ & $100$ \\
			$T$ & Maturity & $10$ years & $10$ years \\
			$F_0$ & Initial account values & $40,50,\ldots,160$ & $40,50,\ldots,160$ \\
			$(c,\kappa_s)$ & Fee--surrender-adjustment pairs & $(0,0)$, $(0.01,0.01)$, $(0.03,0)$, $(0.03,0.02)$, $(0.03,0.03)$ & same \\
			-- & Surrender convention & without surrender; discrete-time optimal surrender & same \\
			\addlinespace[0.5ex]
			\multicolumn{4}{@{}l}{\textit{Market and Heston SLV parameters}} \\
			$S_0$ & Reference fund/index level & $1000$ & $6265.58$ \\
			$r_t$ & Risk-free rate & $r_t\equiv0.04$ & deterministic market-implied curve; see Fig.~\ref{fig:real-market-rq-curves} \\
			$q_t$ & dividend yield & $q_t\equiv0$ & deterministic market-implied curve; see Fig.~\ref{fig:real-market-rq-curves} \\
			$\sigLV$ & Target local volatility & synthetic non-flat surface, Eq.~\eqref{eq:wang-synthetic-lv} & calibrated EURO STOXX 50 surface \\
			$V_0$ & Initial level of $V_t$ & $0.04$ & $0.0265$ \\
			$\kappa_V$ & Mean reversion & $2.0$ & $3.70$ \\
			$\theta_V$ & Long-run level of $V_t$ & $0.04$ & $0.0461$ \\
			$\omega$ & Diffusion coefficient of $V_t$ & $0.30$ & $0.50$ \\
			$\rho$ & Correlation & $-0.70$ & $-0.75$ \\
			\addlinespace[0.5ex]
			\multicolumn{4}{@{}l}{\textit{Numerical specification}} \\
			$L$ & Leverage surface & forward SLV calibration & forward SLV calibration \\
			$N_Y,N_t$ & Main hybrid PDE grid & $2000,2000$ & $4000,4000$ \\
			$N_S,N_t$ & Main LV PDE/leverage grid & $2000,2000$ & $4000,4000$ \\
			$N_{\mathrm{grid}}$ & Common convergence level & $250,500,1000,2000,4000$ & main level $4000$ \\
			\bottomrule
		\end{tabularx}
		\normalsize
		\caption{Contract, market, Heston SLV and numerical parameters for the synthetic and market-data experiments.}
		\label{tab:numerical-inputs}
	\end{table}
	
	The fee parameter $c$ is the continuously compounded fee entering the fee-deducted policy-account value $F_t^{(c)}$. The surrender-adjustment parameter is denoted by $\kappa_s$ throughout the numerical section and is the rate in the exponential adjustment $e^{-\kappa_s(T-t)}$, rather than a one-off percentage charge. The cases labelled ``with surrender'' use a discrete-time optimal-surrender convention: the obstacle is imposed at every positive backward grid date and at maturity, but not at $t=0$. The reported values are therefore grid-based approximations to the corresponding continuous-time optimal-surrender problem. The surrender boundaries are reported in the policy-account coordinate $F$, while $S$ remains reserved for the reference fund or index before fee deduction.
	
	Numerical tolerances, interpolation rules and substepping are common to the forward and backward implementations described in Appendix~\ref{app:numerical-details}; experiment-specific values are stated below when they enter the reported calculations.

	\subsection{Synthetic Data}
	\label{subsec:synthetic-data}
	
	\begin{figure}[H]
		\centering
		\IfFileExists{synthetic_LV_leverage_surfaces.png}{%
			\includegraphics[width=0.90\textwidth]{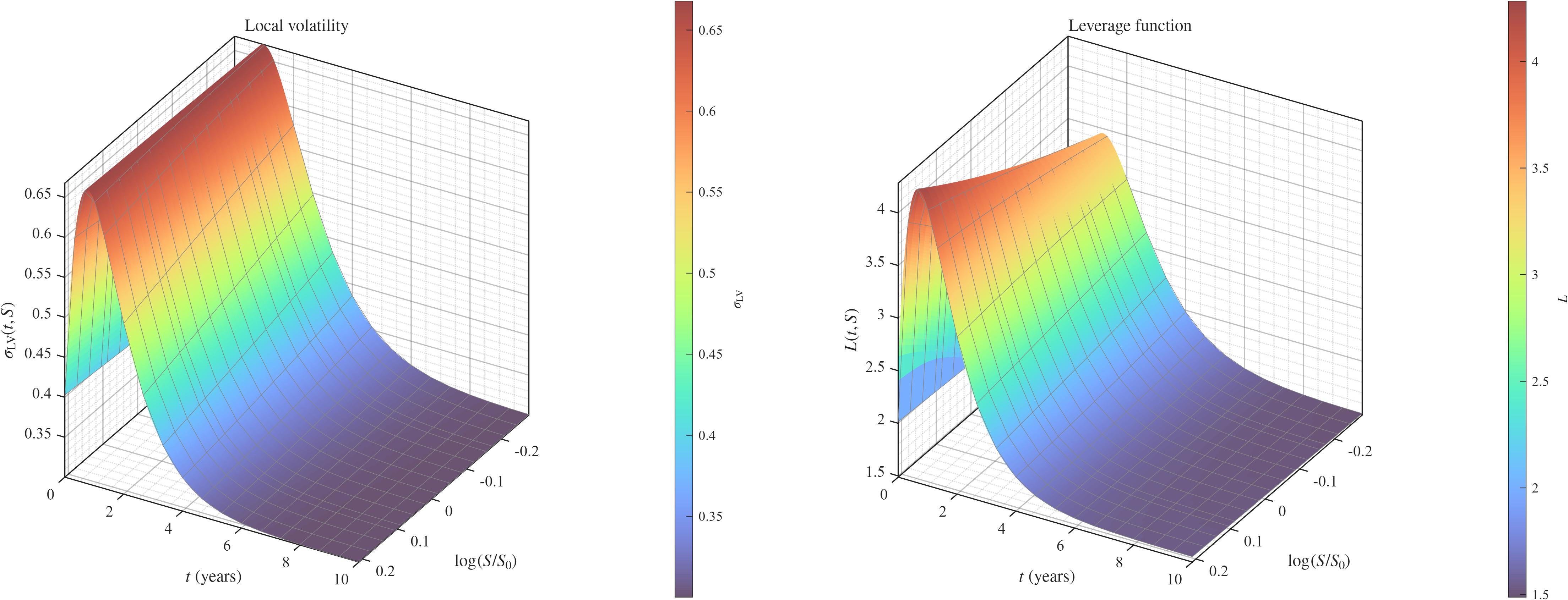}%
		}{%
			\fbox{\parbox[c][0.30\textheight][c]{0.95\textwidth}{%
					\centering Figure file
					\texttt{synthetic\_LV\_leverage\_surfaces} not found.}}%
		}
		\caption{Synthetic local-volatility surface (left) and corresponding
			calibrated SLV leverage function (right), shown against maturity and
			log-moneyness $\log(S/S_0)$.}
		\label{fig:slv-leverage-function}
	\end{figure}
	
	The numerical tests use a synthetic, non-flat local-volatility surface that is not calibrated from market option data. It is a fixed, controlled input for comparing LV and SLV under the same projected local-volatility structure, thereby isolating the effect of stochastic volatility conditional on that input. Following the synthetic example of \citet{Wang2025}, the target local-volatility surface is
	\begin{equation} 
		\sigLV(t,S)=0.3+\psi e^{-\psi},\qquad \psi=(t+0.1)\sqrt{\frac{S}{S_0}+0.1}.
		\label{eq:wang-synthetic-lv}
	\end{equation}
	The dimensionless ratio $S/S_0$ avoids making the synthetic surface depend on the arbitrary scale of the initial fund level. The same target surface is used for the one-factor LV benchmark and for the SLV calibration. The objective is to isolate the residual effect of stochastic volatility when both LV and SLV are built from the same non-flat volatility structure. The reported results therefore compare GMMB guarantee values without surrender and with a surrender option under LV and SLV.

	{
		\renewcommand{\PriceDecimals}{4}
		
		We first study the numerical convergence of the deterministic pricing schemes on
		the representative contract
		\[
		G=F_0=100,\qquad T=10,\qquad c=3\%,\qquad \kappa_s=2\%.
		\]
		The market and Heston SLV parameters are those reported in Table~\ref{tab:numerical-inputs}.
		For each refinement level \(N_{\mathrm{grid}}\), the SLV leverage function is
		recalibrated up to maturity \(T=10\) on the same time grid and is then used
		in the backward hybrid tree/finite-difference valuation step. The LV
		benchmark is computed by a one-dimensional finite-difference PDE using the
		same number of spatial and temporal nodes. The times shown in
		Tables~\ref{tab:convergence-european-gmmb} and
		\ref{tab:convergence-surrender-gmmb} refer only to the corresponding
		backward valuation steps. At each refinement level, the leverage surface is
		calibrated once in a separate forward pass and is then reused by both the
		terminal-only and surrender valuations. The corresponding forward
		leverage-calibration CPU times are \(0.014\), \(0.095\), \(0.742\),
		\(6.048\) and \(54.058\) seconds for
		\(N_{\mathrm{grid}}=250,500,1000,2000\) and \(4000\), respectively.
		These one-off calibration costs are therefore excluded from the pricing
		times reported in the tables.
		
		Table~\ref{tab:convergence-european-gmmb} reports the GMMB results without surrender.
		As expected, the LV finite-difference values converge very rapidly. The SLV
		hybrid values display a slightly less monotone behaviour on the coarser grids,
		because they combine the error of the backward solver with the interpolation
		error of the calibrated leverage function. Nevertheless, the SLV price
		stabilises around the LV value as the grid is refined. On the finest grid,
		\(N_{\mathrm{grid}}=4000\), the SLV--LV difference is only \(\pnum{0.0028}\), which is
		negligible on the scale of the contract. This is consistent with the fact that the terminal payoff depends only on the terminal fee-deducted policy-account value and therefore mainly tests the marginal calibration of the SLV model to the target local-volatility surface. The finest deterministic SLV value is close to, but slightly below, the reported Monte Carlo (MC) 95\% interval. We therefore use the simulation as an independent consistency check rather than as a formal statistical validation of the deterministic value.
		
		\begin{table}[htbp]
			\centering
			
			\begin{adjustbox}{max width=\textwidth}
				{
					\renewcommand{\arraystretch}{1.35}
					\begin{tabular}{rrrr}
						\toprule
						\(N_{\mathrm{grid}}\) & LV PDE & SLV hybrid & SLV MC \\
						\midrule
						250
						& \(\substack{\pnum{30.6293}\\(0.0020\ \mathrm{s})}\)
						& \(\substack{\pnum{30.2915}\\(0.045\ \mathrm{s})}\)
						& \multirow{5}{*}{\(\substack{\pnum{30.6561}\\\pm\pnum{0.0225}}\)} \\
						500
						& \(\substack{\pnum{30.6286}\\(0.0060\ \mathrm{s})}\)
						& \(\substack{\pnum{30.4801}\\(0.23\ \mathrm{s})}\)
						& \\
						1000
						& \(\substack{\pnum{30.6284}\\(0.025\ \mathrm{s})}\)
						& \(\substack{\pnum{30.6801}\\(1.5\ \mathrm{s})}\)
						& \\
						2000
						& \(\substack{\pnum{30.6284}\\(0.13\ \mathrm{s})}\)
						& \(\substack{\pnum{30.6243}\\(11\ \mathrm{s})}\)
						& \\
						4000
						& \(\substack{\pnum{30.6283}\\(0.40\ \mathrm{s})}\)
						& \(\substack{\pnum{30.6311}\\(94\ \mathrm{s})}\)
						& \\
						\bottomrule
					\end{tabular}
				}
			\end{adjustbox}
			\normalsize
			\caption{Convergence of European GMMB guarantee prices without surrender under LV and SLV.
				The SLV Monte Carlo benchmark uses \(5\cdot10^6\) paths and the leverage function
				calibrated on the finest grid. Monte Carlo values are reported as
				\(x\pm1.96\,\mathrm{SE}\). Deterministic times refer to the backward valuation only;
				forward leverage-calibration times are reported in the text.}
			\label{tab:convergence-european-gmmb}
		\end{table}
		
		Table~\ref{tab:convergence-surrender-gmmb} gives the corresponding results for
		the GMMB with discrete-time optimal surrender, with surrender allowed at every positive time-grid date. The LV finite-difference values
		again converge smoothly. The SLV hybrid prices stabilise around \(43.2\),
		and the SLV premium relative to LV remains positive across all refinement
		levels. On the finest grid, the premium is \(\pnum{0.8726}\), corresponding to
		roughly \(2.1\%\) of the LV price. This persistence under refinement supports the conclusion that the SLV--LV difference is not generated solely by the displayed spatial and temporal discretisation. The independent-policy least-squares Monte Carlo (LSMC) estimate is lower than the finest deterministic value, as is typical for a value obtained by applying an estimated stopping policy; its reported sampling interval does not quantify that policy bias. The comparison is therefore interpreted as an order-of-magnitude and policy-consistency check.
		
		\begin{table}[htbp]
			\centering 
			
			\begin{adjustbox}{max width=\textwidth}
				{
					\renewcommand{\arraystretch}{1.35}
					\begin{tabular}{rrrr}
						\toprule
						\(N_{\mathrm{grid}}\) & LV PDE & SLV hybrid & SLV LSMC \\
						\midrule
						250
						& \(\substack{\pnum{42.3120}\\(\ptime{0.0030}\ \mathrm{s})}\)
						& \(\substack{\pnum{42.6312}\\(\ptime{0.0610}\ \mathrm{s})}\)
						& \multirow{5}{*}{\(\substack{\pnum{43.0923}\\\pm\pnum{0.0519}}\)} \\
						500
						& \(\substack{\pnum{42.3234}\\(\ptime{0.0140}\ \mathrm{s})}\)
						& \(\substack{\pnum{42.9243}\\(\ptime{0.3680}\ \mathrm{s})}\)
						& \\
						1000
						& \(\substack{\pnum{42.3291}\\(\ptime{0.0490}\ \mathrm{s})}\)
						& \(\substack{\pnum{43.2502}\\(\ptime{2.4500}\ \mathrm{s})}\)
						& \\
						2000
						& \(\substack{\pnum{42.3321}\\(\ptime{0.2070}\ \mathrm{s})}\)
						& \(\substack{\pnum{43.1852}\\(\ptime{18.8400}\ \mathrm{s})}\)
						& \\
						4000
						& \(\substack{\pnum{42.3335}\\(\ptime{0.7740}\ \mathrm{s})}\)
						& \(\substack{\pnum{43.2061}\\(\ptime{155.7440}\ \mathrm{s})}\)
						& \\
						\bottomrule
					\end{tabular}
					
				}
			\end{adjustbox}
			\normalsize
			\caption{Convergence of GMMB prices with discrete-time optimal surrender under LV and SLV.
				The SLV LSMC benchmark uses \(5\cdot10^6\) paths, with \(4\cdot10^6\) training paths
				and \(10^6\) independent pricing paths, on the \(4000\)-date surrender grid.
				LSMC values are reported as \(x\pm1.96\,\mathrm{SE}\). Deterministic times refer to
				the backward valuation only; forward leverage-calibration times are reported in the text.}
			\label{tab:convergence-surrender-gmmb}
		\end{table}
		
		The computational times show the different complexity of the two deterministic
		pricing procedures. The one-dimensional LV PDE remains very fast even on the
		finest grid. The SLV hybrid method is more expensive because the backward
		valuation is coupled with the Heston tree and uses the calibrated leverage
		surface. The cost increases significantly on the finest levels, but the
		\(N_{\mathrm{grid}}=2000\) grid already gives a stable estimate of the SLV
		premium while requiring a considerably smaller computational time than the
		\(4000\)-level run. The forward calibration constitutes a separate one-off
		preprocessing stage. Its CPU time increases from \(0.014\) seconds at
		\(N_{\mathrm{grid}}=250\) to \(6.048\) seconds at
		\(N_{\mathrm{grid}}=2000\) and \(54.058\) seconds at
		\(N_{\mathrm{grid}}=4000\). This increase reflects both the propagation of
		the joint law over the full time horizon and the increasing number of spatial,
		temporal and Heston-tree nodes involved in the calibration. The stored
		leverage surface itself contains approximately
		\((N_t+1)(N_S+1)\) values.

		Overall, the results without surrender support the numerical marginal consistency of the SLV calibration, whereas the results with surrender show a stable and economically visible difference once early surrender is introduced. The Monte Carlo and independent-policy LSMC calculations provide useful independent checks, subject to the qualifications stated above.
		
	}

	\medskip
	\noindent
	We now turn to the full pricing tables. The following results use the main pricing grids reported in Table~\ref{tab:numerical-inputs} and compare guarantee values without surrender and with a surrender option across several initial account values and fee--surrender-adjustment combinations.

	\begin{table}[H]
		\centering
		\renewcommand{\arraystretch}{1.06}
		
		\resizebox{\textwidth}{!}{%
			\begin{tabular}{rccccccccccccccc}
				\toprule
				&
				&
				\multicolumn{2}{c}{$c=0\%,\,\kappa_s=0\%$}
				&
				&
				\multicolumn{2}{c}{$c=1\%,\,\kappa_s=1\%$}
				&
				&
				\multicolumn{2}{c}{$c=3\%,\,\kappa_s=0\%$}
				&
				&
				\multicolumn{2}{c}{$c=3\%,\,\kappa_s=2\%$}
				&
				&
				\multicolumn{2}{c}{$c=3\%,\,\kappa_s=3\%$} \\
				
				\cmidrule(lr){3-4}
				\cmidrule(lr){6-7}
				\cmidrule(lr){9-10}
				\cmidrule(lr){12-13}
				\cmidrule(lr){15-16}
				
				$F_0$
				&
				& LV & SLV
				&
				& LV & SLV
				&
				& LV & SLV
				&
				& LV & SLV
				&
				& LV & SLV \\
				\midrule
				
				40
				&
				& \pnum{40.4389} & \pnum{40.4665}
				&
				& \pnum{41.9923} & \pnum{41.9638}
				&
				& \pnum{45.0021} & \pnum{44.9855}
				&
				& \pnum{45.0021} & \pnum{44.9855}
				&
				& \pnum{45.0021} & \pnum{44.9855} \\
				
				50
				&
				& \pnum{36.8996} & \pnum{36.8969}
				&
				& \pnum{38.4950} & \pnum{38.4826}
				&
				& \pnum{41.6352} & \pnum{41.6497}
				&
				& \pnum{41.6352} & \pnum{41.6497}
				&
				& \pnum{41.6352} & \pnum{41.6497} \\
				
				60
				&
				& \pnum{33.9780} & \pnum{33.9377}
				&
				& \pnum{35.5807} & \pnum{35.5791}
				&
				& \pnum{38.7757} & \pnum{38.7443}
				&
				& \pnum{38.7757} & \pnum{38.7443}
				&
				& \pnum{38.7757} & \pnum{38.7443} \\
				
				70
				&
				& \pnum{31.5224} & \pnum{31.5556}
				&
				& \pnum{33.1124} & \pnum{33.0835}
				&
				& \pnum{36.3156} & \pnum{36.2852}
				&
				& \pnum{36.3156} & \pnum{36.2852}
				&
				& \pnum{36.3156} & \pnum{36.2852} \\
				
				80
				&
				& \pnum{29.4266} & \pnum{29.4464}
				&
				& \pnum{30.9927} & \pnum{31.0021}
				&
				& \pnum{34.1752} & \pnum{34.1397}
				&
				& \pnum{34.1752} & \pnum{34.1397}
				&
				& \pnum{34.1752} & \pnum{34.1397} \\
				
				90
				&
				& \pnum{27.6144} & \pnum{27.6291}
				&
				& \pnum{29.1506} & \pnum{29.1333}
				&
				& \pnum{32.2948} & \pnum{32.2656}
				&
				& \pnum{32.2948} & \pnum{32.2656}
				&
				& \pnum{32.2948} & \pnum{32.2656} \\
				
				100
				&
				& \pnum{26.0290} & \pnum{26.0591}
				&
				& \pnum{27.5329} & \pnum{27.5789}
				&
				& \pnum{30.6284} & \pnum{30.6278}
				&
				& \pnum{30.6284} & \pnum{30.6278}
				&
				& \pnum{30.6284} & \pnum{30.6278} \\
				
				110
				&
				& \pnum{24.6522} & \pnum{24.6911}
				&
				& \pnum{26.0988} & \pnum{26.1017}
				&
				& \pnum{29.1402} & \pnum{29.1268}
				&
				& \pnum{29.1402} & \pnum{29.1268}
				&
				& \pnum{29.1402} & \pnum{29.1268} \\
				
				120
				&
				& \pnum{23.4287} & \pnum{23.4299}
				&
				& \pnum{24.8376} & \pnum{24.8698}
				&
				& \pnum{27.8021} & \pnum{27.8129}
				&
				& \pnum{27.8021} & \pnum{27.8129}
				&
				& \pnum{27.8021} & \pnum{27.8129} \\
				
				130
				&
				& \pnum{22.3341} & \pnum{22.3221}
				&
				& \pnum{23.7063} & \pnum{23.7265}
				&
				& \pnum{26.5912} & \pnum{26.5991}
				&
				& \pnum{26.5912} & \pnum{26.5991}
				&
				& \pnum{26.5912} & \pnum{26.5991} \\
				
				140
				&
				& \pnum{21.3489} & \pnum{21.3516}
				&
				& \pnum{22.6848} & \pnum{22.7221}
				&
				& \pnum{25.4979} & \pnum{25.5356}
				&
				& \pnum{25.4979} & \pnum{25.5356}
				&
				& \pnum{25.4979} & \pnum{25.5356} \\
				
				150
				&
				& \pnum{20.4572} & \pnum{20.4475}
				&
				& \pnum{21.7577} & \pnum{21.7838}
				&
				& \pnum{24.5077} & \pnum{24.5379}
				&
				& \pnum{24.5077} & \pnum{24.5379}
				&
				& \pnum{24.5077} & \pnum{24.5379} \\
				
				160
				&
				& \pnum{19.6460} & \pnum{19.6582}
				&
				& \pnum{20.9124} & \pnum{20.9043}
				&
				& \pnum{23.5998} & \pnum{23.5981}
				&
				& \pnum{23.5998} & \pnum{23.5981}
				&
				& \pnum{23.5998} & \pnum{23.5981} \\
				
				\bottomrule
			\end{tabular}%
		}
		
		\caption{European GMMB guarantee prices in the LV and SLV models.}
		\label{tab:european-lv-slv-results}
	\end{table}

	Table~\ref{tab:european-lv-slv-results} confirms the expected near-coincidence
	between LV and SLV prices when the payoff depends only on the terminal
	fee-deducted policy-account value. Across all entries in the table, the maximum relative SLV--LV price
	discrepancy, computed relative to the LV price, is approximately
	$0.167\%$, while the maximum absolute discrepancy is about $0.046$
	(both based on the unrounded numerical values underlying
	Table~\ref{tab:european-lv-slv-results}). Both maxima are attained at
	$F_0=100$ for $(c,\kappa_s)=(1\%,1\%)$. These residual differences are
	consistent with numerical pricing error, interpolation of the calibrated
	leverage surface, and the finite calibration grid. Notice also that, without
	surrender, the three columns with $c=3\%$ are identical across $\kappa_s$, as
	expected, since the surrender-adjustment parameter $\kappa_s$ does not enter
	the terminal-only payoff.

	\begin{table}[H]
		{\centering
			\renewcommand{\arraystretch}{1.08}
			
			\resizebox{\textwidth}{!}{%
				\begin{tabular}{rccccccccccccccc}
					\toprule
					&
					&
					\multicolumn{2}{c}{$c=0\%,\,\kappa_s=0\%$}
					&
					&
					\multicolumn{2}{c}{$c=1\%,\,\kappa_s=1\%$}
					&
					&
					\multicolumn{2}{c}{$c=3\%,\,\kappa_s=0\%$}
					&
					&
					\multicolumn{2}{c}{$c=3\%,\,\kappa_s=2\%$}
					&
					&
					\multicolumn{2}{c}{$c=3\%,\,\kappa_s=3\%$} \\
					
					\cmidrule(lr){3-4}
					\cmidrule(lr){6-7}
					\cmidrule(lr){9-10}
					\cmidrule(lr){12-13}
					\cmidrule(lr){15-16}
					
					$F_0$
					&
					& LV & SLV
					&
					& LV & SLV
					&
					& LV & SLV
					&
					& LV & SLV
					&
					& LV & SLV \\
					\midrule
					
					40
					&
					& \pnum{61.3137} & \pnum{61.9916}
					&
					& \pnum{64.2687} & \pnum{64.7066}
					&
					& \pnum{62.6676} & \pnum{63.1919}
					&
					& \pnum{67.4173} & \pnum{67.7442}
					&
					& \pnum{70.3474} & \pnum{70.3664} \\
					
					50
					&
					& \pnum{54.8427} & \pnum{55.6541}
					&
					& \pnum{57.7225} & \pnum{58.4461}
					&
					& \pnum{56.8391} & \pnum{57.5087}
					&
					& \pnum{61.0659} & \pnum{61.7249}
					&
					& \pnum{63.5809} & \pnum{64.1364} \\
					
					60
					&
					& \pnum{49.7008} & \pnum{50.5321}
					&
					& \pnum{52.5014} & \pnum{53.3586}
					&
					& \pnum{52.1215} & \pnum{52.7567}
					&
					& \pnum{55.9583} & \pnum{56.6622}
					&
					& \pnum{58.2352} & \pnum{58.9099} \\
					
					70
					&
					& \pnum{45.4982} & \pnum{46.4824}
					&
					& \pnum{48.2074} & \pnum{49.0731}
					&
					& \pnum{48.2055} & \pnum{48.8534}
					&
					& \pnum{51.7120} & \pnum{52.4821}
					&
					& \pnum{53.8019} & \pnum{54.5866} \\
					
					80
					&
					& \pnum{41.9924} & \pnum{42.9586}
					&
					& \pnum{44.6052} & \pnum{45.5538}
					&
					& \pnum{44.8931} & \pnum{45.5299}
					&
					& \pnum{48.1143} & \pnum{48.9069}
					&
					& \pnum{50.0429} & \pnum{50.8782} \\
					
					90
					&
					& \pnum{39.0189} & \pnum{39.9650}
					&
					& \pnum{41.5361} & \pnum{42.4429}
					&
					& \pnum{42.0489} & \pnum{42.6833}
					&
					& \pnum{45.0219} & \pnum{45.8353}
					&
					& \pnum{46.8088} & \pnum{47.6850} \\
					
					100
					&
					& \pnum{36.4610} & \pnum{37.4087}
					&
					& \pnum{38.8864} & \pnum{39.8802}
					&
					& \pnum{39.5760} & \pnum{40.2346}
					&
					& \pnum{42.3321} & \pnum{43.1876}
					&
					& \pnum{43.9937} & \pnum{44.9276} \\
					
					110
					&
					& \pnum{34.2613} & \pnum{35.2038}
					&
					& \pnum{36.5727} & \pnum{37.4799}
					&
					& \pnum{37.4031} & \pnum{38.0250}
					&
					& \pnum{39.9685} & \pnum{40.7947}
					&
					& \pnum{41.5189} & \pnum{42.4319} \\
					
					120
					&
					& \pnum{32.3313} & \pnum{33.1953}
					&
					& \pnum{34.5559} & \pnum{35.4911}
					&
					& \pnum{35.4763} & \pnum{36.1113}
					&
					& \pnum{37.8732} & \pnum{38.7202}
					&
					& \pnum{39.3244} & \pnum{40.2664} \\
					
					130
					&
					& \pnum{30.6236} & \pnum{31.4448}
					&
					& \pnum{32.7672} & \pnum{33.6649}
					&
					& \pnum{33.7538} & \pnum{34.3636}
					&
					& \pnum{36.0010} & \pnum{36.8245}
					&
					& \pnum{37.3634} & \pnum{38.2860} \\
					
					140
					&
					& \pnum{29.1016} & \pnum{29.9208}
					&
					& \pnum{31.1688} & \pnum{32.0708}
					&
					& \pnum{32.2123} & \pnum{32.8436}
					&
					& \pnum{34.3261} & \pnum{35.1751}
					&
					& \pnum{35.6089} & \pnum{36.5619} \\
					
					150
					&
					& \pnum{27.7362} & \pnum{28.5135}
					&
					& \pnum{29.7315} & \pnum{30.5951}
					&
					& \pnum{30.8264} & \pnum{31.4311}
					&
					& \pnum{32.8208} & \pnum{33.6423}
					&
					& \pnum{34.0322} & \pnum{34.9593} \\
					
					160
					&
					& \pnum{26.5040} & \pnum{27.2896}
					&
					& \pnum{28.4317} & \pnum{29.2238}
					&
					& \pnum{29.5664} & \pnum{30.1124}
					&
					& \pnum{31.4532} & \pnum{32.2114}
					&
					& \pnum{32.5999} & \pnum{33.4628} \\
					
					\bottomrule
				\end{tabular}%
			}
		}
		\caption{GMMB prices with discrete-time optimal surrender in the LV and SLV models.}
		\label{tab:surrender-lv-slv-results}
	\end{table}

	Table~\ref{tab:surrender-lv-slv-results} shows a systematic SLV premium
	relative to LV across all fee--surrender-adjustment combinations. The maximum
	absolute SLV--LV difference is about $0.994$ and is attained at $F_0=100$ for
	$(c,\kappa_s)=(1\%,1\%)$, while the maximum relative difference is about
	$2.96\%$ and is attained at $F_0=160$ for $(c,\kappa_s)=(0\%,0\%)$. These discrepancies are substantially larger than those observed in the benchmark without surrender. Together with the refinement study, they are consistent with the fact that the surrender feature depends on continuation values and on the conditional future dynamics of the fund, not only on the one-dimensional marginal distributions matched by the local-volatility calibration.
	
	The comparison between Tables~\ref{tab:european-lv-slv-results} and \ref{tab:surrender-lv-slv-results} is the central numerical message of the experiment. Without surrender, the LV and SLV values are nearly indistinguishable, as expected from a calibration based on the same projected volatility structure. When the surrender option is introduced, the SLV model produces a persistent difference, predominantly an SLV premium in the economically relevant cases. Stochastic volatility changes the continuation region and the timing value of the surrender right, even though the LV and SLV models share the same local-volatility target.

	Table~\ref{tab:fair-fee-lv-slv-results} reports the fair insurance fees
	$c_M^*$ for the GMMB with surrender option. Using the notation introduced
	in Section~\ref{sec:contract}, for each model
	$M\in\{\mathrm{LV},\mathrm{SLV}\}$ the numerical root is computed for
	\begin{equation}
		H_M(c,\kappa_s)=F_0e^{-cT}D_q(0,T)
		+U^{M}_{\mathrm{Sur}}(0,X_0;F_0,c,\kappa_s)-F_0.
	\end{equation}
	Here $D_q(0,T)=1$ in the synthetic experiment because $q_t\equiv0$. The scalar equation $H_M(c,\kappa_s)=0$ is solved by a Brent bracketing method. Each function evaluation recomputes the guarantee component with the corresponding backward solver, so that the fee is consistent with the same surrender rule used in the pricing tables. The root-finding residuals are at most about $8.9\cdot10^{-5}$ in contract-value units and are negligible relative to the fee differences reported below.
	
	\begin{table}[H]
		\centering
		
		\begin{tabular}{rrrrr}
			\toprule
			$\kappa_s$ & $c^*_{\mathrm{LV}}$ & $c^*_{\mathrm{SLV}}$ & $\Delta c^*$ (bp) & Relative increase \\
			\midrule
			$0\%$   & \pgfmathprintnumber[fixed,zerofill,precision=6]{0.055422} & \pgfmathprintnumber[fixed,zerofill,precision=6]{0.056216} & \num{7.94} & \num{1.43}\% \\
			$0.5\%$ & \pgfmathprintnumber[fixed,zerofill,precision=6]{0.056589} & \pgfmathprintnumber[fixed,zerofill,precision=6]{0.057485} & \num{8.96} & \num{1.58}\% \\
			$1\%$   & \pgfmathprintnumber[fixed,zerofill,precision=6]{0.057842} & \pgfmathprintnumber[fixed,zerofill,precision=6]{0.058830} & \num{9.88} & \num{1.71}\% \\
			\bottomrule
		\end{tabular}
		\normalsize
		\caption{Fair insurance fees for the GMMB with surrender option under LV and SLV. The last two columns report the SLV--LV absolute increment in annual basis points and the corresponding relative increase. The root is computed by a Brent method.}
		\label{tab:fair-fee-lv-slv-results}
	\end{table}
	
	For all three values of the surrender-adjustment parameter, the SLV fair
	insurance fee exceeds the corresponding LV fair insurance fee. The annual fee increments are
	$7.94$, $8.96$ and $9.88$ basis points for $\kappa_s=0$, $0.5\%$ and
	$1\%$, respectively, corresponding to relative increases of approximately
	$1.43\%$, $1.58\%$ and $1.71\%$. In both models the fair insurance fee
	increases with $\kappa_s$, consistently with
	Proposition~\ref{prop:fee-monotone}. For a fixed insurance fee, a larger
	$\kappa_s$ increases the immediate surrender payoff and therefore cannot
	decrease the value of the surrenderable guarantee. Under the monotonicity
	condition on the fair-fee equation stated in
	Proposition~\ref{prop:fee-monotone}, this translates into a nondecreasing
	fair insurance fee.

	We finally investigate the optimal surrender regions associated with the GMMB surrender option. In the local-volatility model, the stopping problem can be represented in the policy-account
	value $F$ only, since $F_t^{(c)}$ is in one-to-one correspondence with the reference fund level $S_t$ at each fixed time. The optimal surrender boundary can therefore be represented by a single curve
	\(B^{\mathrm{LV}}(t)\). By contrast, in the stochastic-local volatility model, the
	state variables are the policy-account value and $V$. The optimal surrender
	boundary is therefore a surface,
	\[
	F = B^{\mathrm{SLV}}(t,V),
	\]
	and the surrender region is
	\[
	\mathcal S^{\mathrm{SLV}}
	=
	\left\{(t,F,V): F \leq B^{\mathrm{SLV}}(t,V)\right\}.
	\]
	
	Figure~\ref{fig:surrender_regions} reports two representative cases. The left
	panel of each figure compares the LV boundary with several sections of the SLV
	surrender-boundary surface, obtained by fixing different levels of $V$.
	The right panel shows the corresponding two-dimensional surrender region in the
	$(F,V)$-plane at \(t=5\). The shaded region is the surrender region, while the
	white region is the continuation region.
	
	A clear pattern emerges. For fixed time, the SLV surrender boundary decreases as
	the level of \(V\) increases. This is consistent with the economic intuition
	that higher volatility increases the value of waiting, thereby reducing the incentive
	to surrender early. Consequently, the surrender region becomes smaller when the
	level of $V$ is high. The figures also show that increasing the surrender-adjustment
	parameter \(\kappa_s\) shifts the surrender boundary upward. This is expected from the
	immediate surrender payoff
	\[
	\left(G-e^{-\kappa_s(T-t)}F_t^{(c)}\right)^+,
	\]
	since a larger \(\kappa_s\) reduces the effective policy-account value entering the surrender
	payoff and therefore makes early surrender more attractive.
	
	These plots complement the pricing results reported above. Without surrender,
	LV and SLV prices are close because the payoff depends only on the
	distribution of the terminal policy-account value. With the surrender option, instead, the
	value depends on the continuation region and hence on the conditional future
	dynamics. The dependence of \(B^{\mathrm{SLV}}(t,V)\) on $V$ explains
	why prices with surrender under SLV differ systematically from those obtained under
	the pure LV specification.
	
	\begin{figure}[htbp]
		\centering
		
		\begin{subfigure}{\textwidth}
			\centering
			\includegraphics[width=\textwidth]{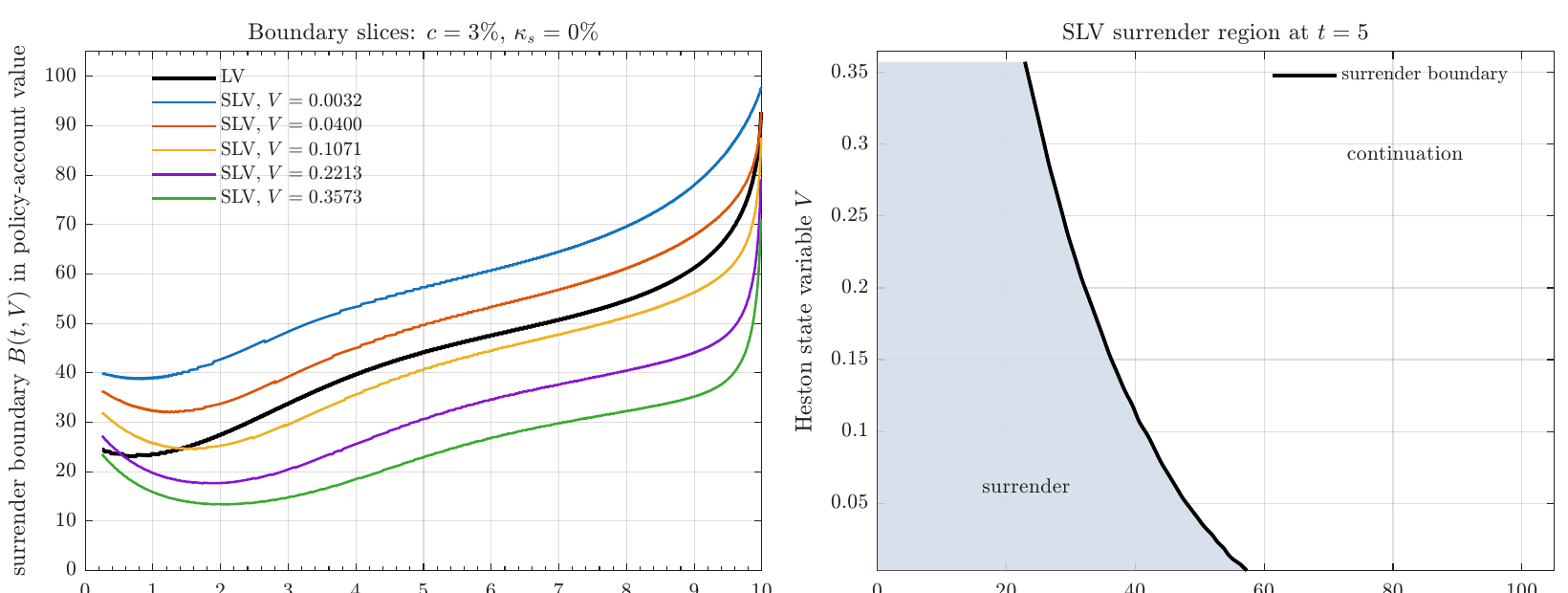}
			\caption{\(c=3\%\), \(\kappa_s=0\%\).}
			\label{fig:surrender_kappa0}
		\end{subfigure}
		
		\vspace{0.3cm}
		
		\begin{subfigure}{\textwidth}
			\centering
			\includegraphics[width=\textwidth]{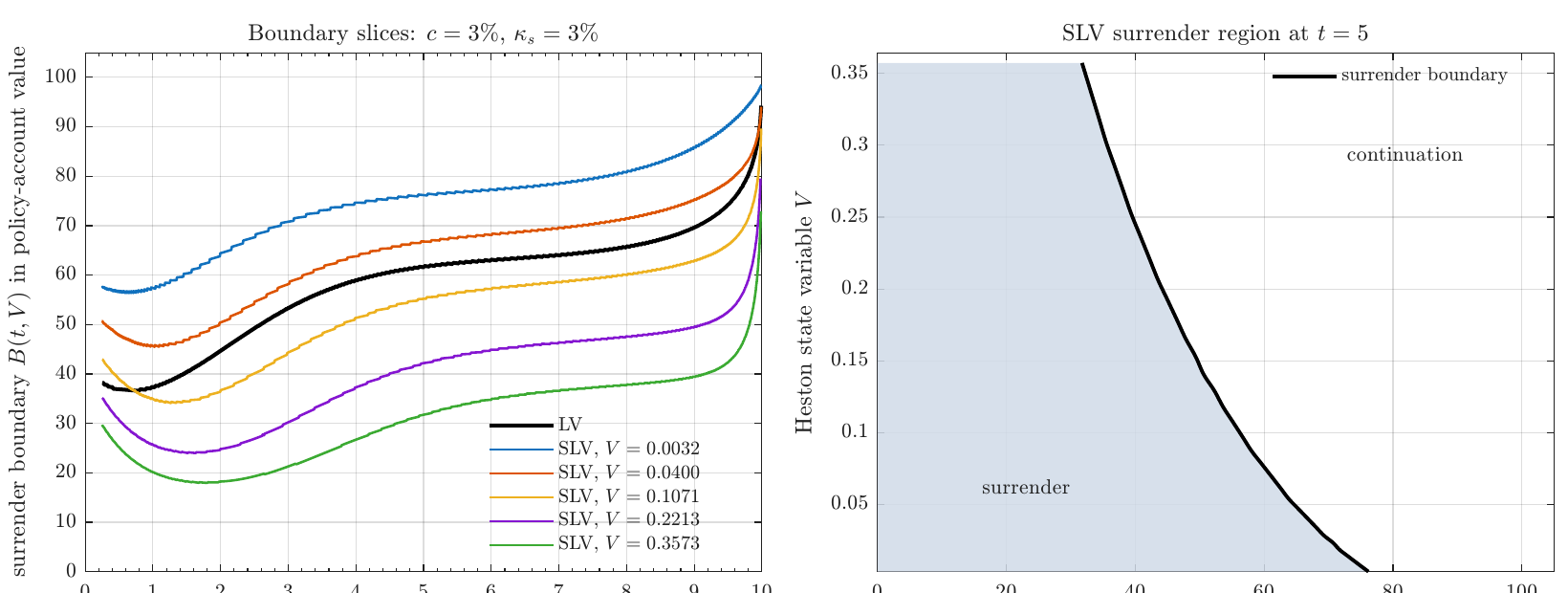}
			\caption{\(c=3\%\), \(\kappa_s=3\%\).}
			\label{fig:surrender_kappa3}
		\end{subfigure}
		
		\caption{Optimal surrender boundaries and surrender regions for the GMMB surrender option. In each row, the left panel reports the LV boundary and
			selected SLV boundary slices for different levels of $V$. The right panel
			reports the SLV surrender region in the $(F,V)$-plane at \(t=5\).}
		\label{fig:surrender_regions}
	\end{figure}
	
	\FloatBarrier
	
	\subsection{Market Data}
	\label{subsec:real-market-eurostoxx}
	
	The synthetic experiment isolates the effect of stochastic volatility under a controlled non-flat volatility structure. We now repeat the same LV--SLV comparison using EURO STOXX 50 market data, including direct market-data counterparts of the pricing and surrender-region analyses reported above. The purpose is twofold: first, to construct the common LV marginal target and the deterministic curves from the observed vanilla-option information, while informing the Heston parameters from additional volatility-market data; second, to assess whether differences between LV and SLV remain economically relevant under an option-calibrated local-volatility surface.
	
	The market construction is deliberately sequential. We first extract deterministic discount and dividend-yield term structures from matched call--put quotes through put--call parity. We then calibrate the LV surface to the European option panel. The Heston parameters are selected separately using volatility-market and joint equity--volatility diagnostics, after which the SLV leverage function is recalibrated to the same LV target. Thus the deterministic curves are fixed before, and independently of, the specification of $V_t$.
	
	The valuation date is 15 July 2026 and the reference index level is
	$S_0=6265.58$. The option snapshot, obtained from Bloomberg, contains matched
	EURO STOXX 50 call and put quotes. The matched call--put pairs are first used
	to infer the deterministic discount and dividend-yield term structures through
	put--call parity. After this parity-based preprocessing and the exclusion of
	inactive observations, maturities shorter than $0.01$ years and option prices
	below $0.001S_0$, the local-volatility calibration itself is carried out using
	$2070$ European put prices. This filtered put sample spans $27$ maturities from
	$0.1014$ to $9.4411$ years and is partitioned into short-, medium- and
	long-dated groups to preserve adequate temporal resolution throughout the
	calibration.
	
	At each quoted maturity $T_i$, let $C(K,T_i)$ and $P(K,T_i)$ denote the
	corresponding call and put prices at strike $K$. Put--call parity gives
	\begin{equation}
		C(K,T_i)-P(K,T_i)=S_0D_q(0,T_i)-K D_r(0,T_i),
		\label{eq:market-put-call-parity}
	\end{equation}
	so that the affine dependence on strike has slope $-D_r(0,T_i)$ and intercept
	$S_0D_q(0,T_i)$. Since $S_0$ is known, these two coefficients identify the
	discount factor $D_r(0,T_i)$ and the dividend factor $D_q(0,T_i)$. We define
	\begin{equation}
		H_r(T_i)=-\log D_r(0,T_i),
		\qquad
		H_q(T_i)=-\log D_q(0,T_i),
		\label{eq:market-cumulative-curves}
	\end{equation}
	and interpolate the cumulative curves linearly between the maturity nodes used
	in the numerical implementation. This yields piecewise-constant instantaneous
	rates. Each numerical step uses the exact increments of $H_r$ and $H_q$,
	including when it crosses a curve knot.
	Figure~\ref{fig:real-market-rq-curves} reports the instantaneous term
	structures used in the calculations.
	
	\begin{figure}[htbp]
		\centering
		\includegraphics[width=0.8\textwidth]{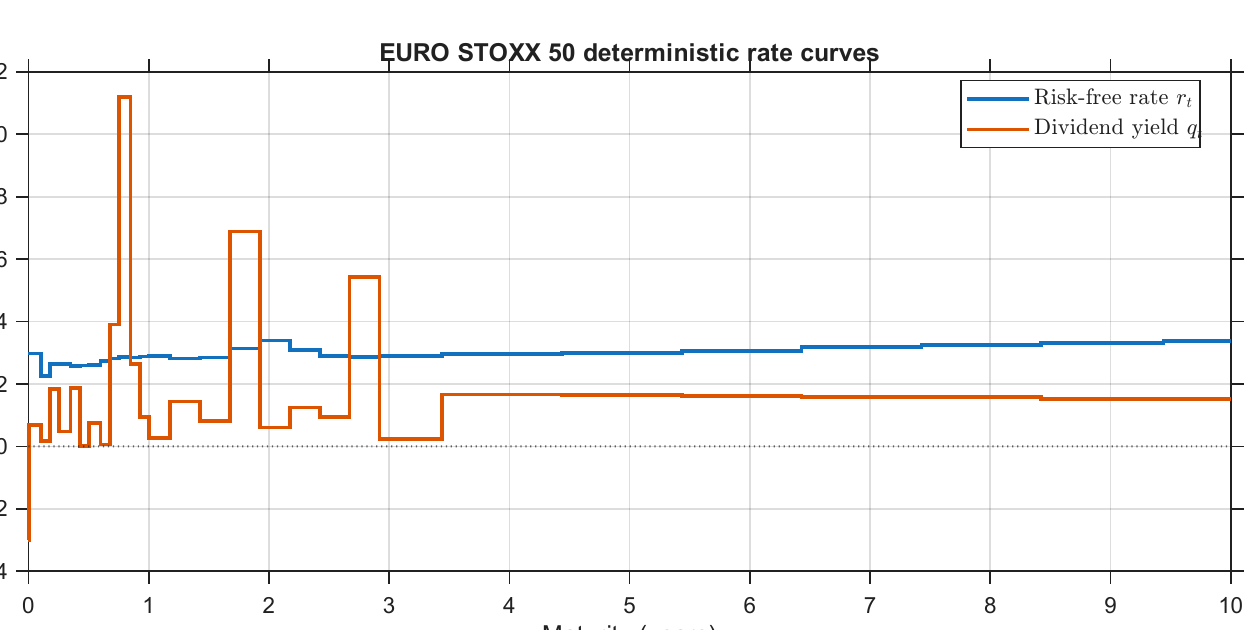}
		\caption{EURO STOXX 50 piecewise-constant instantaneous term structures used in the pricing calculations: discount-rate curve $r_t$ and dividend yield $q_t$. The cumulative curves $H_r$ and $H_q$ are constructed from put--call parity and linearly interpolated between quoted maturity nodes.}
		\label{fig:real-market-rq-curves}
	\end{figure}
	
	The local-volatility input is constructed by applying the neural calibration procedure for an arbitrage-free and complete recombining binomial market model developed in \citet{MolentVellekoop2026}. The calibrated discrete model is converted into the dense local-volatility representation used by the LV and SLV pricing codes and is evaluated by bilinear interpolation in time and log-moneyness, with boundary-cell extrapolation outside the tabulated cell. The calibration criterion is maturity-balanced so that short- and long-dated observations contribute comparably to the objective. Pricing the calibration instruments with the trinomial scheme used during calibration gives a maturity-balanced relative root-mean-square error (RMSE) of $3.1\%$. Repricing the same option set with an independent Crank--Nicolson solver gives $3.3\%$, while the interpolated representation used in the subsequent SLV calculations gives a virtually identical error of $3.3\%$. These are in-sample calibration and numerical-validation measures rather than out-of-sample forecasting statistics. The self-consistent learning construction of \citet{Wang2025} remains the source of the synthetic local-volatility surface used in Section~\ref{subsec:synthetic-data}, but it is not the calibration method used for the EURO STOXX 50 surface.
	
	\begin{rem}
		The approximately $3\%$ errors reported above measure the fit of the
		calibrated local-volatility surface to the observed EURO STOXX 50 option
		panel; they are not LV--SLV discrepancies. The SLV leverage function is
		subsequently calibrated to reproduce the same local-volatility projection,
		so both models share the same option-calibrated LV target. Consequently,
		residual LV--SLV differences for European payoffs are interpreted as
		numerical calibration and discretisation errors. By contrast, differences
		for surrenderable contracts need not vanish, since their values depend on
		conditional continuation dynamics that are not determined by the common
		one-date marginals. These latter differences constitute the model effect
		investigated in this paper. We do not use direct repricing of the full
		$2070$-option panel under SLV as an additional calibration objective;
		accordingly, the $3\%$ figure should not be interpreted as an independently
		estimated SLV-to-market error.
	\end{rem}
	
	The deterministic discount and dividend-yield curves described above are already fixed before any Heston parameter is selected and are used unchanged in both the LV and SLV calculations. The Heston parameters entering the SLV model cannot be identified uniquely from the same vanilla surface used to determine the local volatility. For notational convenience in the parameter-selection step, write $v_0:=V_0$. For a generic parameter vector
	\[
	\eta=(v_0,\kappa_V,\theta_V,\omega,\rho),
	\]
	the leverage function is recalibrated through \eqref{eq:leverage-condition}. Changes in the dynamics of $V_t$ can therefore be partially offset by changes in $L(t,S)$ while preserving the one-dimensional spot marginals; the exact scale indeterminacy is stated in Remark~\ref{rem:variance-scale}. For this reason, a stand-alone Heston calibration to the same EURO STOXX 50 vanilla panel would not, by itself, identify a unique structural parameterization of the SLV decomposition. We instead use additional information from the volatility market and from the joint equity--volatility dynamics, and subsequently recalibrate the leverage function to the same local-volatility target.
	
	The parameters $(v_0,\kappa_V,\theta_V)$ are informed by the contemporaneous VSTOXX futures term structure reported in the Eurex daily statistics for the valuation date. For the actively traded maturities from July to November 2026, the observed levels are
	\[
	17.45,\qquad 18.40,\qquad 19.35,\qquad 20.00,\qquad 20.30,
	\]
	at ACT/365 maturities
	\[
	0.01918,\qquad 0.09589,\qquad 0.17260,\qquad 0.26849,\qquad 0.34521.
	\]
	For the sole purpose of obtaining a parsimonious diagnostic, we exploit the CIR dynamics of $V_t$. Conditional on the current state $V_0=v_0$,
	\[
	\E[V_u\mid V_0=v_0]=\theta_V+(v_0-\theta_V)e^{-\kappa_V u}.
	\]
	Set $\tau=30/365$. For a VSTOXX futures maturity $T_f$, the expected average of $V_t$ over the forward 30-day window $[T_f,T_f+\tau]$ is
	\begin{equation}
		\overline V(T_f)
		=\frac{1}{\tau}\int_{T_f}^{T_f+\tau}\E[V_u\mid V_0=v_0] \,\dd u
		=\theta_V+(v_0-\theta_V)e^{-\kappa_V T_f}
		\frac{1-e^{-\kappa_V\tau}}{\kappa_V\tau}.
		\label{eq:vstoxx-heston-variance-proxy}
	\end{equation}
	Let $F^{\mathrm{VSTOXX}}(T_f)$ denote the observed VSTOXX futures level with maturity $T_f$. We select $(v_0,\kappa_V,\theta_V)$ by nonlinear least squares against the diagnostic proxy
	\begin{equation}
		F^{\mathrm{VSTOXX}}(T_f)\approx100\sqrt{\overline V(T_f)}.
		\label{eq:vstoxx-futures-proxy}
	\end{equation}
	The fit gives
	\[
	v_0=0.0264667,\qquad
	\kappa_V=3.7092,\qquad
	\theta_V=0.0460807,
	\]
	with an RMSE of approximately $0.0635$ VSTOXX index points. The values used in the SLV calculations are rounded to
	\[
	v_0=0.0265,\qquad
	\kappa_V=3.70,\qquad
	\theta_V=0.0461,
	\]
	corresponding to $\sqrt{v_0}=16.28\%$ and $\sqrt{\theta_V}=21.47\%$.
	
	\begin{table}[H]
		\centering
		\caption{VSTOXX futures term structure used to inform the parameters $(v_0,\kappa_V,\theta_V)$ governing $V_t$, together with the corresponding Heston-based VSTOXX proxy.}
		\label{tab:vstoxx-futures-fit}
		\begin{tabular}{rrr}
			\toprule
			Maturity (years) & Market level & Heston-based VSTOXX proxy \\
			\midrule
			0.01918 & 17.45 & 17.4170 \\
			0.09589 & 18.40 & 18.5026 \\
			0.17260 & 19.35 & 19.2790 \\
			0.26849 & 20.00 & 19.9589 \\
			0.34521 & 20.30 & 20.3426 \\
			\bottomrule
		\end{tabular}
	\end{table}
	
	This proxy fit is used only to identify a plausible region for the Heston parameters governing $V_t$; it is not a pricing model for VSTOXX futures. VSTOXX is an option-implied volatility index constructed from EURO STOXX 50 options and represents the square root of implied variance over a fixed tenor, while a VSTOXX future references the future level of that index. Equation~\eqref{eq:vstoxx-futures-proxy} instead replaces this object by the square root of the expected average of $V_t$. The distinction is even more important in the SLV model, where the instantaneous equity variance rate is $L^2(t,S_t)V_t$ rather than $V_t$ alone. The reported RMSE should therefore be read only as the goodness of fit of this low-dimensional diagnostic proxy. In addition, the available liquid futures cover only a short segment of the term structure, so $\kappa_V$ and especially $\theta_V$ are not sharply identified separately.
	
	The remaining parameters are informed by the historical joint dynamics of the
	EURO STOXX 50 and VSTOXX indices. Daily closing levels of the two indices are
	used to construct diagnostics of equity--volatility dependence and
	of the diffusion scale of $V_t$ over windows ending on the valuation date. The
	correlation between their daily log changes is close to $-0.80$ over one-,
	three- and five-year windows. Because VSTOXX is not the Heston volatility component
	$\sqrt{V_t}$, this empirical correlation is not interpreted as a direct
	estimate of $\rho$; rather, it provides a structural anchor for the pronounced
	negative equity--volatility dependence. We therefore set $\rho=-0.75$ as a
	nearby structural value. Similarly, diagnostics based on the dynamics of
	squared VSTOXX levels indicate values of the diffusion coefficient of $V_t$ of roughly
	$0.5$--$0.6$, motivating the choice $\omega=0.50$. These quantities are used
	only as structural anchors and not as risk-neutral parameter estimates.
	
	Combining these inputs gives
	\begin{equation}
		v_0=0.0265,\qquad
		\kappa_V=3.70,\qquad
		\theta_V=0.0461,\qquad
		\omega=0.50,\qquad
		\rho=-0.75.
		\label{eq:market-heston-parameters}
	\end{equation}
	The Feller condition is satisfied,
	\[
	2\kappa_V\theta_V=0.34114>\omega^2=0.25.
	\]
	The parameter vector in \eqref{eq:market-heston-parameters} should therefore be interpreted as a market-informed structural specification rather than as a uniquely identified risk-neutral Heston calibration. The option-calibrated local-volatility surface determines the target one-date spot marginals through the projection condition, while the volatility-market data provide additional information on the dynamics of $V_t$.

	\begin{rem}
		\label{rem:variance-scale}
		The decomposition of the instantaneous equity variance rate into
		$L^2(t,S_t)V_t$ is subject to an exact scale indeterminacy. For any
		$a>0$, define
		\[
		\widetilde V_t=aV_t,
		\qquad
		\widetilde V_0=aV_0,
		\qquad
		\widetilde\theta_V=a\theta_V,
		\qquad
		\widetilde\omega=\sqrt{a}\,\omega,
		\qquad
		\widetilde L(t,s)=\frac{L(t,s)}{\sqrt a},
		\]
		while leaving $\kappa_V$ and $\rho$ unchanged. Then
		$\widetilde V$ has the same CIR form as $V$ and
		\[
		\widetilde L(t,S_t)\sqrt{\widetilde V_t}
		=L(t,S_t)\sqrt{V_t},
		\qquad
		\widetilde L^2(t,S_t)\widetilde V_t
		=L^2(t,S_t)V_t.
		\]
		Hence the spot dynamics and the Markovian-projection identity do not
		identify the scale of $V_t$ by themselves. In the present
		market application, this normalisation is fixed by the external
		volatility-market information used to anchor the scale of $V_t$,
		rather than chosen arbitrarily; concretely, it is anchored by the
		VSTOXX-informed specification in
		\eqref{eq:market-heston-parameters}.
	\end{rem}

	Once the stochastic-volatility parameters have been fixed, the leverage function is recalibrated to the same EURO STOXX 50 local-volatility surface used by the LV benchmark. The results reported below use $4000$ time steps and $4000$ spatial nodes over the ten-year horizon, consistently with the fine-grid specification adopted for the market-data experiment. In the market calibration, four leverage fixed-point passes are applied at each time level: the first uses the direct projection update and the next three use relaxation with weight $\alpha=0.5$. The conditional mean of $V_t$ entering the leverage update is floored with $\varepsilon_V=10^{-12}$. The transformed-coordinate domain uses the tail tolerance $\varepsilon_{\rm fd}=10^{-7}$ in the rule given in Appendix~\ref{app:numerical-details}, and the time step may be split into at most eight forward and four backward implicit substeps. The forward density is propagated with the conservative fully implicit finite-volume/upwind scheme described in Appendix~\ref{app:numerical-details}, with zero-total-flux boundaries. The implementation includes numerical safeguards for low-density regions when
	forming the conditional mean of $V_t$. Since the longest retained option maturity is $9.4411$ years, the calibrated
	local-volatility surface is linearly extrapolated over the remaining part of
	the ten-year horizon from its final calibrated time cell; the ten-year results
	should therefore be interpreted with this qualification.
	
	\begin{figure}[H]
		\centering
		\includegraphics[width=\textwidth]{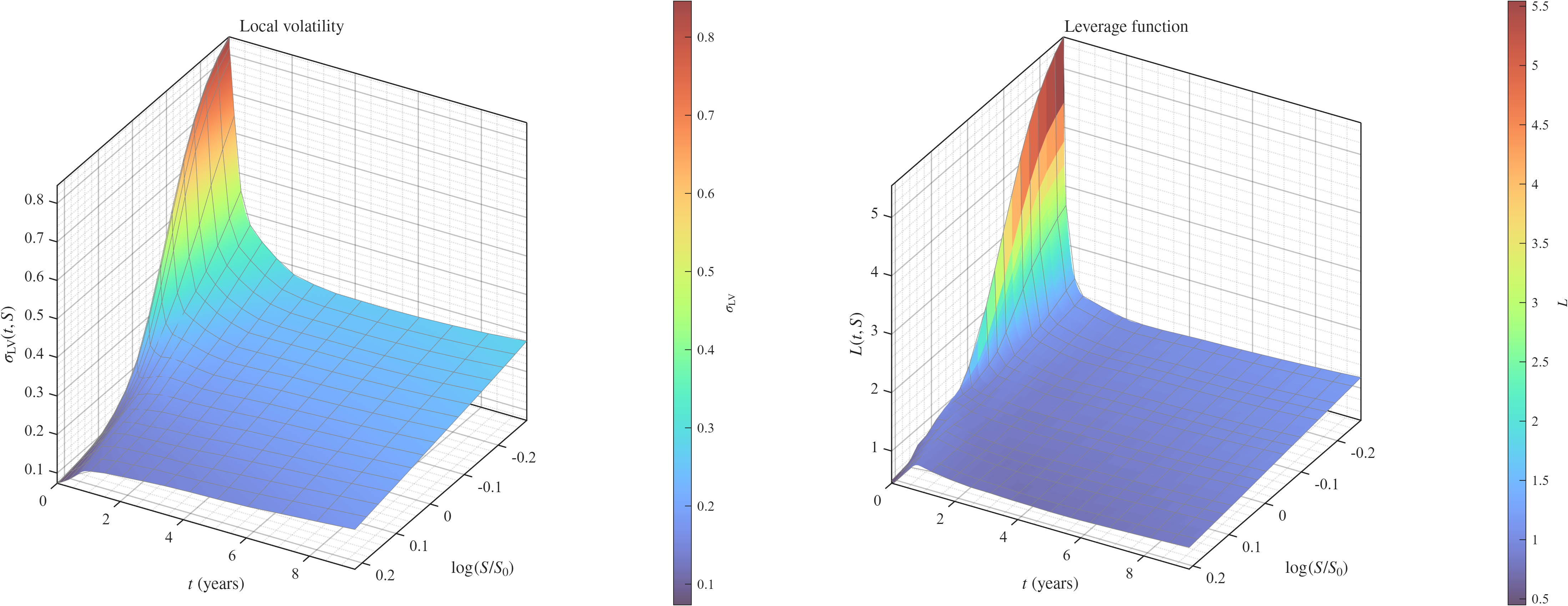}
		\caption{EURO STOXX 50 local-volatility surface (left) and corresponding SLV leverage function (right), shown against maturity and log-moneyness $\log(S/S_0)$.}
		\label{fig:real-market-lv-leverage}
	\end{figure}

	We next repeat the GMMB experiment without surrender under the option-calibrated local-volatility surface and the corresponding market-informed Heston SLV specification. The contract parameters remain $G=100$, $T=10$ years and $F_0\in\{40,50,\ldots,160\}$. For a fixed fee $c$, define
	\[
	\beta=\frac{F_0e^{-cT}}{S_0},
	\qquad
	K_{\mathrm{eq}}=\frac{GS_0e^{cT}}{F_0}.
	\]
	The terminal guarantee can then be written as the scaled European put
	\begin{equation}
		\left(G-F_0\frac{S_T}{S_0}e^{-cT}\right)^+
		=\beta\left(K_{\mathrm{eq}}-S_T\right)^+.
		\label{eq:real-market-gmmb-put-equivalence}
	\end{equation}
	As in the synthetic experiment without surrender, $\kappa_s$ does not enter the terminal payoff, so the three columns with $c=3\%$ coincide by construction.
	
	\begin{table}[H]
		\centering
		\renewcommand{\arraystretch}{1.15}
		
		\resizebox{\textwidth}{!}{%
			\begin{tabular}{rccccccccccccccc}
				\toprule
				&
				&
				\multicolumn{2}{c}{$c=0\%,\,\kappa_s=0\%$}
				&
				&
				\multicolumn{2}{c}{$c=1\%,\,\kappa_s=1\%$}
				&
				&
				\multicolumn{2}{c}{$c=3\%,\,\kappa_s=0\%$}
				&
				&
				\multicolumn{2}{c}{$c=3\%,\,\kappa_s=2\%$}
				&
				&
				\multicolumn{2}{c}{$c=3\%,\,\kappa_s=3\%$} \\
				
				\cmidrule(lr){3-4}
				\cmidrule(lr){6-7}
				\cmidrule(lr){9-10}
				\cmidrule(lr){12-13}
				\cmidrule(lr){15-16}
				
				$F_0$
				&
				& LV & SLV
				&
				& LV & SLV
				&
				& LV & SLV
				&
				& LV & SLV
				&
				& LV & SLV \\
				\midrule
				
				40
				&
				& \pnum{39.9587} & \pnum{39.9533}
				&
				& \pnum{42.9504} & \pnum{42.9458}
				&
				& \pnum{48.3317} & \pnum{48.3284}
				&
				& \pnum{48.3317} & \pnum{48.3284}
				&
				& \pnum{48.3317} & \pnum{48.3284} \\
				
				50
				&
				& \pnum{32.8089} & \pnum{32.8028}
				&
				& \pnum{36.0673} & \pnum{36.0612}
				&
				& \pnum{42.2740} & \pnum{42.2693}
				&
				& \pnum{42.2740} & \pnum{42.2693}
				&
				& \pnum{42.2740} & \pnum{42.2693} \\
				
				60
				&
				& \pnum{26.9366} & \pnum{26.9321}
				&
				& \pnum{30.1223} & \pnum{30.1166}
				&
				& \pnum{36.6366} & \pnum{36.6306}
				&
				& \pnum{36.6366} & \pnum{36.6306}
				&
				& \pnum{36.6366} & \pnum{36.6306} \\
				
				70
				&
				& \pnum{22.4227} & \pnum{22.4217}
				&
				& \pnum{25.2812} & \pnum{25.2777}
				&
				& \pnum{31.6155} & \pnum{31.6095}
				&
				& \pnum{31.6155} & \pnum{31.6095}
				&
				& \pnum{31.6155} & \pnum{31.6095} \\
				
				80
				&
				& \pnum{19.1375} & \pnum{19.1405}
				&
				& \pnum{21.5371} & \pnum{21.5370}
				&
				& \pnum{27.3213} & \pnum{27.3166}
				&
				& \pnum{27.3213} & \pnum{27.3166}
				&
				& \pnum{27.3213} & \pnum{27.3166} \\
				
				90
				&
				& \pnum{16.7496} & \pnum{16.7561}
				&
				& \pnum{18.7475} & \pnum{18.7510}
				&
				& \pnum{23.7774} & \pnum{23.7751}
				&
				& \pnum{23.7774} & \pnum{23.7751}
				&
				& \pnum{23.7774} & \pnum{23.7751} \\
				
				100
				&
				& \pnum{14.9833} & \pnum{14.9925}
				&
				& \pnum{16.6516} & \pnum{16.6582}
				&
				& \pnum{20.9499} & \pnum{20.9505}
				&
				& \pnum{20.9499} & \pnum{20.9505}
				&
				& \pnum{20.9499} & \pnum{20.9505} \\
				
				110
				&
				& \pnum{13.6615} & \pnum{13.6730}
				&
				& \pnum{15.0549} & \pnum{15.0641}
				&
				& \pnum{18.7330} & \pnum{18.7365}
				&
				& \pnum{18.7330} & \pnum{18.7365}
				&
				& \pnum{18.7330} & \pnum{18.7365} \\
				
				120
				&
				& \pnum{12.6706} & \pnum{12.6838}
				&
				& \pnum{13.8269} & \pnum{13.8380}
				&
				& \pnum{16.9783} & \pnum{16.9844}
				&
				& \pnum{16.9783} & \pnum{16.9844}
				&
				& \pnum{16.9783} & \pnum{16.9844} \\
				
				130
				&
				& \pnum{11.9339} & \pnum{11.9486}
				&
				& \pnum{12.8801} & \pnum{12.8929}
				&
				& \pnum{15.5762} & \pnum{15.5845}
				&
				& \pnum{15.5762} & \pnum{15.5845}
				&
				& \pnum{15.5762} & \pnum{15.5845} \\
				
				140
				&
				& \pnum{11.3946} & \pnum{11.4108}
				&
				& \pnum{12.1544} & \pnum{12.1686}
				&
				& \pnum{14.4475} & \pnum{14.4577}
				&
				& \pnum{14.4475} & \pnum{14.4577}
				&
				& \pnum{14.4475} & \pnum{14.4577} \\
				
				150
				&
				& \pnum{11.0125} & \pnum{11.0297}
				&
				& \pnum{11.6038} & \pnum{11.6194}
				&
				& \pnum{13.5354} & \pnum{13.5471}
				&
				& \pnum{13.5354} & \pnum{13.5471}
				&
				& \pnum{13.5354} & \pnum{13.5471} \\
				
				160
				&
				& \pnum{10.7319} & \pnum{10.7497}
				&
				& \pnum{11.1947} & \pnum{11.2116}
				&
				& \pnum{12.7987} & \pnum{12.8116}
				&
				& \pnum{12.7987} & \pnum{12.8116}
				&
				& \pnum{12.7987} & \pnum{12.8116} \\
				
				\bottomrule
			\end{tabular}%
		}
		
		\caption{European GMMB guarantee prices under the EURO STOXX 50 option-calibrated local-volatility surface and the market-informed Heston SLV specification. The deterministic no-arbitrage curves are used in both models.}
		\label{tab:real-market-european-gmmb}
	\end{table}
	
	The comparison without surrender is very tight. Since the surrender-adjustment
	parameter does not enter the terminal payoff, the three displayed columns with
	$c=3\%$ represent the same terminal-only contract. The table therefore contains
	$39$ distinct $(F_0,c)$ cases, corresponding to $13$ initial account values and
	three fee levels. Across these cases, the RMS relative SLV--LV discrepancy is
	approximately $0.0746\%$ and the mean absolute relative discrepancy is
	approximately $0.0553\%$. The maximum absolute difference is $0.0178$ and the
	maximum relative difference is approximately $0.166\%$. If all $65$ displayed
	entries are counted, including the repeated $c=3\%$ columns, the RMS relative
	discrepancy is $0.0649\%$. The residual differences change sign across the table
	and are consistent with residual numerical discretisation error rather than
	with an economically meaningful stochastic-volatility premium.
	
	We finally introduce discrete-time optimal surrender while keeping the same local-volatility target, deterministic term structures and stochastic-volatility parameters. The same surrender convention used in the synthetic experiment is retained, so that differences between the two models can be interpreted in terms of continuation dynamics rather than contractual changes.

	\begin{table}[H]
		\centering
		\renewcommand{\arraystretch}{1.15}
		
		\resizebox{\textwidth}{!}{%
			\begin{tabular}{rccccccccccccccc}
				\toprule
				&
				&
				\multicolumn{2}{c}{$c=0\%,\,\kappa_s=0\%$}
				&
				&
				\multicolumn{2}{c}{$c=1\%,\,\kappa_s=1\%$}
				&
				&
				\multicolumn{2}{c}{$c=3\%,\,\kappa_s=0\%$}
				&
				&
				\multicolumn{2}{c}{$c=3\%,\,\kappa_s=2\%$}
				&
				&
				\multicolumn{2}{c}{$c=3\%,\,\kappa_s=3\%$} \\
				
				\cmidrule(lr){3-4}
				\cmidrule(lr){6-7}
				\cmidrule(lr){9-10}
				\cmidrule(lr){12-13}
				\cmidrule(lr){15-16}
				
				$F_0$
				&
				& LV & SLV
				&
				& LV & SLV
				&
				& LV & SLV
				&
				& LV & SLV
				&
				& LV & SLV \\
				\midrule
				
				40
				&
				& \pnum{59.9805} & \pnum{59.9766}
				&
				& \pnum{63.7882} & \pnum{63.7843}
				&
				& \pnum{59.9835} & \pnum{59.9809}
				&
				& \pnum{67.2343} & \pnum{67.2309}
				&
				& \pnum{70.3509} & \pnum{70.3474} \\
				
				50
				&
				& \pnum{49.9775} & \pnum{49.9732}
				&
				& \pnum{54.7371} & \pnum{54.7329}
				&
				& \pnum{50.1326} & \pnum{50.2001}
				&
				& \pnum{59.0447} & \pnum{59.0411}
				&
				& \pnum{62.9405} & \pnum{62.9366} \\
				
				60
				&
				& \pnum{39.9745} & \pnum{39.9700}
				&
				& \pnum{45.6859} & \pnum{45.6816}
				&
				& \pnum{42.0472} & \pnum{42.2010}
				&
				& \pnum{50.8551} & \pnum{50.8515}
				&
				& \pnum{55.5301} & \pnum{55.5260} \\
				
				70
				&
				& \pnum{30.1687} & \pnum{30.6521}
				&
				& \pnum{36.6348} & \pnum{36.6303}
				&
				& \pnum{35.5730} & \pnum{35.7489}
				&
				& \pnum{42.6656} & \pnum{42.6619}
				&
				& \pnum{48.1197} & \pnum{48.1154} \\
				
				80
				&
				& \pnum{24.0911} & \pnum{24.7378}
				&
				& \pnum{28.4190} & \pnum{28.9920}
				&
				& \pnum{30.3829} & \pnum{30.5636}
				&
				& \pnum{34.8300} & \pnum{35.1830}
				&
				& \pnum{40.7092} & \pnum{40.7048} \\
				
				90
				&
				& \pnum{20.2215} & \pnum{20.8346}
				&
				& \pnum{23.4333} & \pnum{24.0792}
				&
				& \pnum{26.2485} & \pnum{26.4314}
				&
				& \pnum{29.1863} & \pnum{29.6682}
				&
				& \pnum{33.2988} & \pnum{33.3448} \\
				
				100
				&
				& \pnum{17.5498} & \pnum{18.1118}
				&
				& \pnum{20.0697} & \pnum{20.6804}
				&
				& \pnum{23.0084} & \pnum{23.1950}
				&
				& \pnum{25.0281} & \pnum{25.5200}
				&
				& \pnum{27.3115} & \pnum{27.9188} \\
				
				110
				&
				& \pnum{15.6193} & \pnum{16.1296}
				&
				& \pnum{17.6559} & \pnum{18.2203}
				&
				& \pnum{20.4957} & \pnum{20.6818}
				&
				& \pnum{21.9226} & \pnum{22.3945}
				&
				& \pnum{23.4090} & \pnum{24.0549} \\
				
				120
				&
				& \pnum{14.2852} & \pnum{14.6697}
				&
				& \pnum{15.8562} & \pnum{16.3761}
				&
				& \pnum{18.5393} & \pnum{18.7102}
				&
				& \pnum{19.5487} & \pnum{19.9950}
				&
				& \pnum{20.5774} & \pnum{21.1960} \\
				
				130
				&
				& \pnum{13.3631} & \pnum{13.6511}
				&
				& \pnum{14.5552} & \pnum{14.9738}
				&
				& \pnum{17.0031} & \pnum{17.1559}
				&
				& \pnum{17.6930} & \pnum{18.1160}
				&
				& \pnum{18.4329} & \pnum{19.0137} \\
				
				140
				&
				& \pnum{12.6977} & \pnum{12.9513}
				&
				& \pnum{13.6361} & \pnum{13.9463}
				&
				& \pnum{15.7779} & \pnum{15.9230}
				&
				& \pnum{16.2326} & \pnum{16.6195}
				&
				& \pnum{16.7605} & \pnum{17.3047} \\
				
				150
				&
				& \pnum{12.2193} & \pnum{12.4575}
				&
				& \pnum{12.9560} & \pnum{13.2205}
				&
				& \pnum{14.7898} & \pnum{14.9346}
				&
				& \pnum{15.1128} & \pnum{15.4259}
				&
				& \pnum{15.4409} & \pnum{15.9419} \\
				
				160
				&
				& \pnum{11.8632} & \pnum{12.0921}
				&
				& \pnum{12.4491} & \pnum{12.6942}
				&
				& \pnum{13.9893} & \pnum{14.1366}
				&
				& \pnum{14.2399} & \pnum{14.4952}
				&
				& \pnum{14.4497} & \pnum{14.8551} \\
				
				\bottomrule
			\end{tabular}%
		}
		
		\caption{GMMB prices with discrete-time optimal surrender under the EURO STOXX 50 option-calibrated local-volatility surface and the market-informed Heston SLV specification. The deterministic no-arbitrage curves are used in both models.}
		\label{tab:real-market-surrender-gmmb}
	\end{table}

	The comparison with discrete-time optimal surrender is materially different from the benchmark without surrender. Across the $65$ entries in Table~\ref{tab:real-market-surrender-gmmb}, the absolute RMSE of the SLV--LV difference is approximately $0.3543$, the mean absolute relative difference is $1.464\%$, and the RMS relative difference is $1.884\%$. The maximum absolute difference is $0.6467$, attained at $F_0=80$ for $(c,\kappa_s)=(0,0)$, while the maximum relative difference is approximately $3.279\%$, attained at $F_0=120$ for $(c,\kappa_s)=(1\%,1\%)$. A small number of low-$F_0$ cases display negative SLV--LV differences of only a few thousandths of a contract-value unit; these are negligible relative to the economically relevant discrepancies observed once early surrender becomes material.
	
	The contrast between terminal-only and surrenderable contracts is the main
	result of the market-data experiment. Without surrender, all reported
	LV--SLV relative discrepancies remain below approximately $0.17\%$ and are
	consistent with residual numerical error. Once surrender is introduced, the
	RMS relative difference rises to $1.884\%$ and the maximum relative difference
	reaches approximately $3.279\%$. These differences are an order of magnitude
	larger than the numerical discrepancies observed for the corresponding terminal-only contracts on the same fine grid.
	They are therefore interpreted as economically meaningful model differences
	arising from conditional continuation dynamics and the state-dependent
	surrender decision. The results provide evidence that matching the one-date
	marginals implied by vanilla options does not eliminate model risk for
	surrenderable insurance liabilities, subject to the parameter-identification
	and extrapolation qualifications discussed above.
	
	To complete the market-data comparison, we compute the fair insurance fees
	$c_M^*$ for the same GMMB with surrender option. We set $F_0=G=100$ and
	$T=10$ years and consider $\kappa_s\in\{0,0.005,0.010\}$, as in the
	synthetic fair-fee experiment. Using the notation introduced in Section~\ref{sec:contract}, the fair-fee equation for each model
	$M\in\{\mathrm{LV},\mathrm{SLV}\}$ becomes
	\begin{equation}
		H_M(c,\kappa_s)=F_0e^{-cT}D_q(0,T)
		+U^M_{\mathrm{Sur}}(0,X_0;F_0,c,\kappa_s)-F_0=0,
		\label{eq:fair-fee-market}
	\end{equation}
	where $D_q(0,T)=\exp\{-\int_0^T q_u\,du\}$ is the dividend discount
	factor inferred from the option market. The leverage function is calibrated once to the option-calibrated local-volatility surface and then kept fixed during the root search, since the insurance fee is a contractual parameter rather than a parameter of the market dynamics. Each evaluation of $H_M(c,\kappa_s)$ recomputes the guarantee component with the corresponding backward solver, and the scalar root is obtained with the same Brent procedure used in the synthetic experiment.
	
	\begin{table}[H]
		\centering
		\begin{tabular}{rrrrr}
			\toprule
			$\kappa_s$ & $c^*_{\mathrm{LV}}$ & $c^*_{\mathrm{SLV}}$ & $\Delta c^*$ (bp) & Relative increase \\
			\midrule
			$0\%$   & \pgfmathprintnumber[fixed,zerofill,precision=6]{0.002787} & \pgfmathprintnumber[fixed,zerofill,precision=6]{0.003569} & \num{7.82} & \num{28.06}\% \\
			$0.5\%$ & \pgfmathprintnumber[fixed,zerofill,precision=6]{0.003446} & \pgfmathprintnumber[fixed,zerofill,precision=6]{0.004347} & \num{9.01} & \num{26.15}\% \\
			$1\%$   & \pgfmathprintnumber[fixed,zerofill,precision=6]{0.004296} & \pgfmathprintnumber[fixed,zerofill,precision=6]{0.005320} & \num{10.24} & \num{23.84}\% \\
			\bottomrule
		\end{tabular}
		\caption{Fair insurance fees for the GMMB with surrender option under the EURO STOXX 50 market-informed specification. The last two columns report the SLV--LV absolute increment in annual basis points and the corresponding relative increase. The root is computed by a Brent method with $F_0=G=100$ and $T=10$ years.}
		\label{tab:market-fair-fees}
	\end{table}
	
	Table~\ref{tab:market-fair-fees} shows that the SLV fair insurance fee is
	systematically higher than the LV fair insurance fee. The annual-fee differences are
	approximately $7.82$, $9.01$ and $10.24$ basis points for $\kappa_s=0$,
	$0.5\%$ and $1\%$, respectively. In absolute terms these increments are
	close to the $7.94$--$9.88$ basis-point range obtained in the synthetic
	experiment. In relative terms, however, the market-data effect is much larger:
	the SLV fair insurance fee exceeds the LV fair insurance fee by approximately $28.06\%$, $26.15\%$
	and $23.84\%$, compared with only $1.43\%$, $1.58\%$ and $1.71\%$ in the
	synthetic case. The omitted root-finding residuals are at most
	$7.9\cdot10^{-5}$ in contract-value units and are negligible on the scale of
	these comparisons.
	
	The level of the fair insurance fee also differs sharply between the two experiments.
	For example, at $\kappa_s=0$ the LV fair insurance fee is $554.22$ basis points in
	the synthetic case but only $27.87$ basis points in the market-data case. An
	important contributor to this difference is the dividend treatment in the
	account convention: the synthetic experiment has $q_t\equiv0$ and hence
	$D_q(0,T)=1$, whereas the market-data specification incorporates the
	nonzero market-implied dividend-yield term structure through the factor
	$D_q(0,T)$ in the maturity value of the fee-deducted account. The two
	experiments also differ in their interest-rate inputs, local-volatility surfaces
	and Heston parameters, so the difference in fair-fee levels should not
	be attributed to the dividend effect alone. In both
	models the market-data fair insurance fee increases with $\kappa_s$, consistently
	with Proposition~\ref{prop:fee-monotone}. Thus the fair-fee comparison
	provides an additional indication that the impact of stochastic volatility on
	the surrenderable guarantee is not specific to the synthetic local-volatility
	surface.
	
	\FloatBarrier
	
	We next examine the market-data surrender boundaries for the two representative cases $(c,\kappa_s)=(3\%,0)$ and $(3\%,3\%)$, using $F_0=G=100$. Figure~\ref{fig:market-surrender-regions} reports the LV boundary together with five representative slices of the SLV boundary surface, spanning the displayed range of $V$ on the Heston tree. The right panel of each row shows the corresponding SLV surrender region in the $(F,V)$-plane at $t=5$ over $0\leq V\leq0.30$. The numerical boundaries are plotted without smoothing. Localised changes near curve knots may reflect both the piecewise-constant dividend-yield inputs and the finite grid; they are therefore interpreted qualitatively rather than as exact discontinuities of the continuous-time boundary.
	
	At $t=5$, the displayed SLV policy varies materially across values of $V$. The general pattern is that a higher level of $V$ lowers the policy-account boundary and expands the value of continuation, while increasing $\kappa_s$ shifts the surrender region upward. The displayed $V$ levels are nodes of the chosen, VSTOXX-informed normalisation of $V_t$ and should be interpreted subject to Remark~\ref{rem:variance-scale}. The market-data boundaries therefore provide a policy-level complement to the price and fair-fee comparisons, without relying on individual boundary coordinates that may move under grid refinement.
	
	\begin{figure}[htbp]
		\centering
		\begin{subfigure}{\textwidth}
			\centering
			\includegraphics[width=\textwidth]{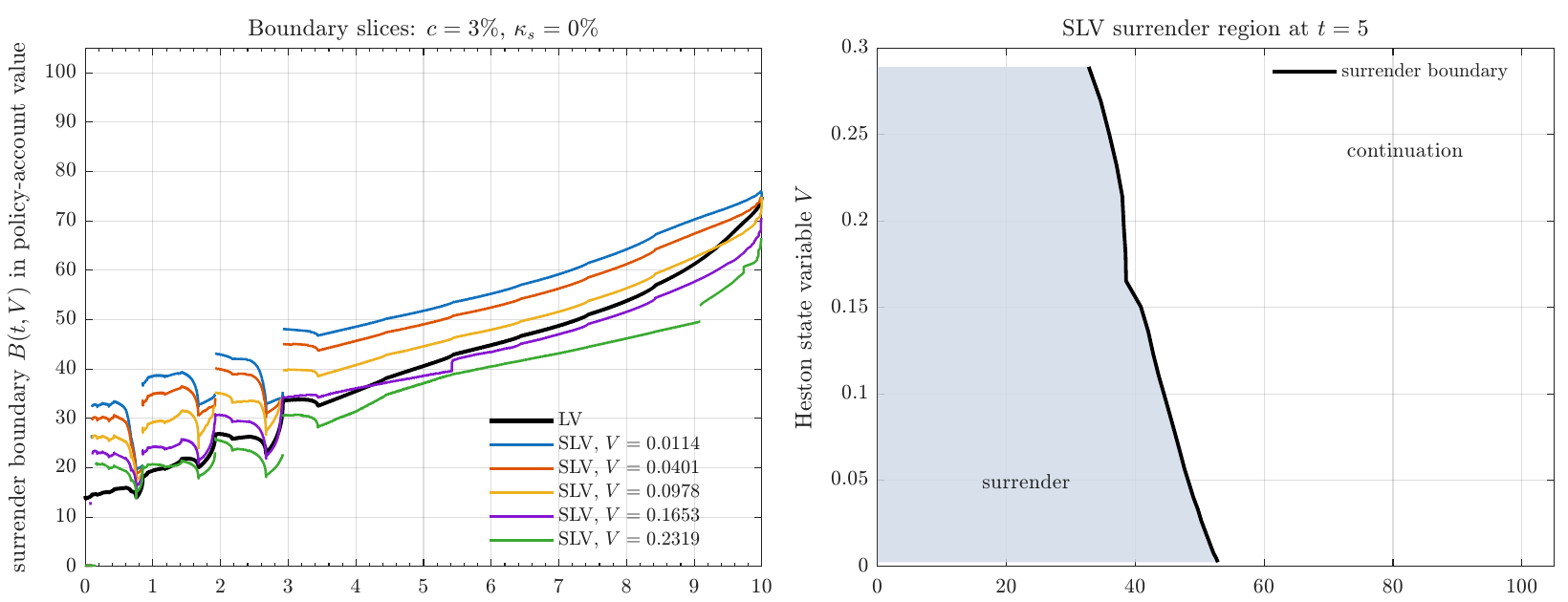}%
			\caption{$c=3\%$, $\kappa_s=0\%$.}
			\label{fig:market-surrender-kappa0}
		\end{subfigure}
		
		\vspace{0.3cm}
		
		\begin{subfigure}{\textwidth}
			\centering
			\includegraphics[width=\textwidth]{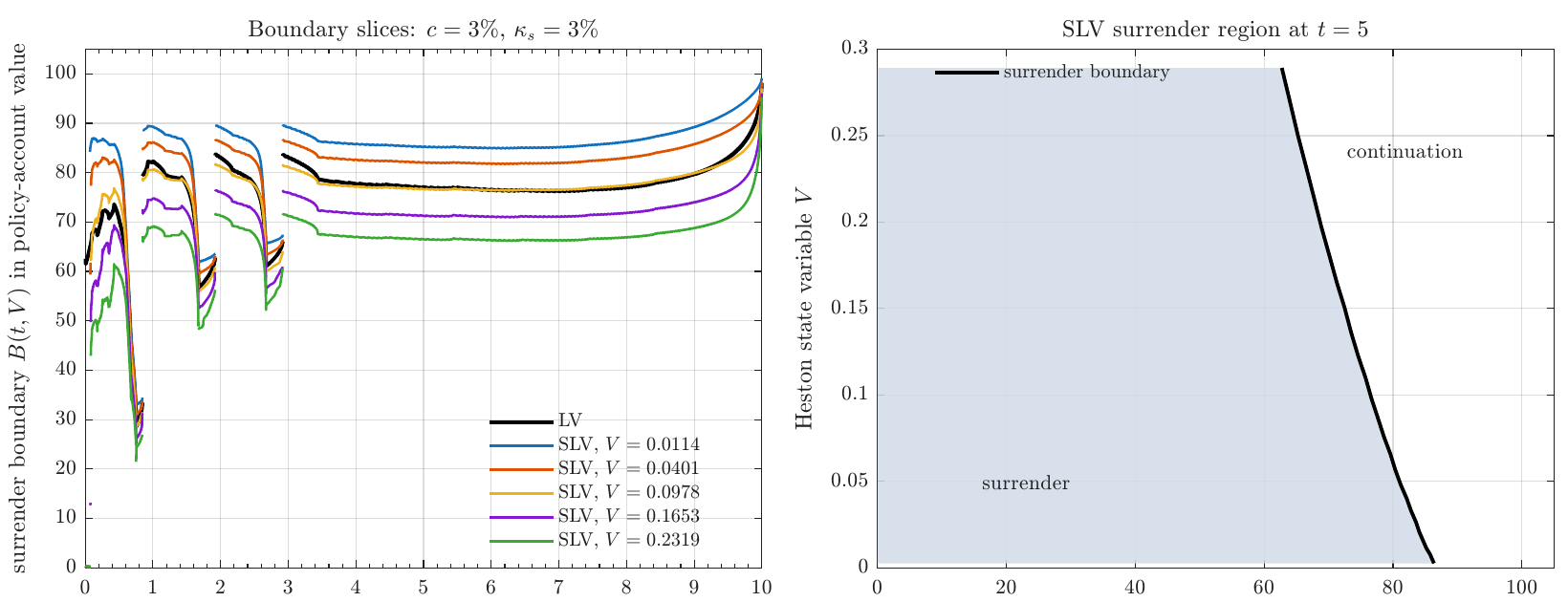}%
			\caption{$c=3\%$, $\kappa_s=3\%$.}
			\label{fig:market-surrender-kappa3}
		\end{subfigure}
		
		\caption{Optimal surrender boundaries and surrender regions under the EURO STOXX 50 market-informed specification. In each row, the left panel compares the LV boundary with five representative slices of the SLV boundary surface. The right panel reports the SLV surrender region in the $(F,V)$-plane at $t=5$. The shaded area is the surrender region and the white area is the continuation region.}
		\label{fig:market-surrender-regions}
	\end{figure}

	\FloatBarrier
	
	\section{Conclusion}
	\label{sec:conclusion}
	
	This paper studies GMMB guarantees with surrender options under local-volatility and Heston stochastic-local volatility models constrained to the same target local-volatility structure. The matched-target comparison is designed to isolate the model risk that remains after imposing the same one-dimensional marginal target at the model level, since continuation values and surrender decisions depend on conditional future dynamics that are not fully determined by the local-volatility projection.
	
	The hybrid tree/finite-difference approach provides a deterministic method for the Heston SLV setting with a calibrated leverage function. The numerical results show a clear distinction between terminal-only guarantees and guarantees with surrender: LV and SLV valuations are nearly indistinguishable in the former case, whereas economically relevant differences emerge once discrete-time optimal surrender is introduced.
	
	The market-data application to the EURO STOXX 50 provides a complementary market-informed test of the same mechanism. The GMMB comparison without surrender remains extremely tight, while the difference increases once the surrender option is introduced. Together with the volatility-dependent surrender regions, these results support the interpretation that continuation values and state-dependent surrender incentives are the main source of the observed model difference.
	
	\section*{Acknowledgements} 
	
	The authors report that no generative AI has been used for this paper, apart from suggestions for improvements in language and grammar (ChatGPT  5.6 Sol). The authors take full responsibility for the content of the publication.

	\appendix
	\setcounter{section}{0}
	\refstepcounter{section}
	
	\section*{Appendix \thesection. Numerical details of the Heston SLV hybrid method}
	\label{app:numerical-details}
	
	\phantomsection
	\addcontentsline{toc}{section}{Appendix \thesection. Numerical details of the Heston SLV hybrid method}
	
	\setcounter{equation}{0}
	\setcounter{figure}{0}
	\renewcommand{\theequation}{\thesection.\arabic{equation}}
	\renewcommand{\thefigure}{\thesection.\arabic{figure}}
	
	This appendix follows the order in which the Heston SLV algorithm is executed. Objects that are independent of the calibration iteration are constructed first: the time grid, the deterministic rate increments, the recombining Heston tree and the two spatial grids. The forward calibration then advances one time interval at a time. At each interval, the accepted leverage slice determines the transformed map and the coefficients of the one-dimensional forward equations; the density is propagated at every variance-tree node, redistributed through the tree, projected onto the stock grid and used to determine the next leverage slice. Once all leverage slices have been accepted, the surface is frozen and the same tree--PDE architecture is traversed backward to value the GMMB. The discussion below concerns the correlated branch $\rho\neq0$ unless stated otherwise. The independent branch uses the same Heston tree and the same leverage projection, but works directly on the log-stock grid and therefore omits the transformed map.
	\setcounter{subsection}{-1}
	\subsection{Time grid and deterministic inputs.}
	The reported hybrid calculations use the uniform time grid
	\begin{equation}
		t_i=i\Delta t,
		\qquad
		\Delta t=\frac{T}{N_t},
		\qquad
		i=0,\ldots,N_t.
		\label{eq:appendix-time-step}
	\end{equation}
	For each interval $[t_i,t_{i+1}]$, the deterministic curves enter through the exact interval averages
	\begin{equation}
		\bar r_i
		=
		\frac{1}{\Delta t}
		\int_{t_i}^{t_{i+1}} r_t\,\dd t,
		\qquad
		\bar q_i
		=
		\frac{1}{\Delta t}
		\int_{t_i}^{t_{i+1}} q_t\,\dd t,
		\label{eq:appendix-average-rates}
	\end{equation}
	so that
	\[
	D_r(t_i,t_{i+1})=e^{-\bar r_i\Delta t},
	\qquad
	D_q(t_i,t_{i+1})=e^{-\bar q_i\Delta t}.
	\]
	These interval quantities are computed once and are used in both the forward and backward sweeps.
	
	\subsection{Trinomial tree}
	\label{app:trinomial-tree}
	
	For the CIR process $V_t$, the exact first two conditional moments over one time step are
	\begin{align}
		M_1(v)
		&=
		\theta_V+e^{-\kappa_V\Delta t}(v-\theta_V),
		\nonumber\\
		M_2(v)
		&=
		M_1(v)^2
		+
		\frac{\omega^2(1-e^{-\kappa_V\Delta t})}{\kappa_V}
		\left[
		\frac12\theta_V(1-e^{-\kappa_V\Delta t})
		+ve^{-\kappa_V\Delta t}
		\right].
		\label{eq:appendix-cir-moments}
	\end{align}
	The implementation constructs a time-homogeneous square-root lattice with spacing
	\begin{equation}
		h_V=\frac34\omega\sqrt{\Delta t},
		\qquad
		v_n=
		\left[
		\max\left\{0,\sqrt{V_0}+nh_V\right\}
		\right]^2,
		\label{eq:appendix-variance-lattice}
	\end{equation}
	for a finite integer range of $n$ determined by the admissibility search. The node $n=0$ is exactly $V_0$, so the tree is rooted at the prescribed initial variance.
	
	At a current lattice node $v_n$, let $n_a$ be the first lattice index at or above the conditional mean $M_1(v_n)$ in square-root coordinates. The neighbouring indices used by the search are
	\begin{equation}
		n_b=n_a-1,
		\qquad
		n_c=n_a+1,
		\qquad
		n_d=n_a-2,
		\qquad
		n_e=n_a-3.
		\label{eq:appendix-tree-neighbours}
	\end{equation}
	For a candidate triplet of child indices $(k_0,k_1,k_2)$, let
	$\pi_{n\to k_c}$, $c\in\{0,1,2\}$, denote the corresponding transition
	probability from the parent node $n$ to the candidate child node $k_c$. They are
	
	\begin{equation}
		\sum_{c=0}^2\pi_{n\to k_c}=1,
		\qquad
		\sum_{c=0}^2\pi_{n\to k_c}v_{k_c}=M_1(v_n),
		\qquad
		\sum_{c=0}^2\pi_{n\to k_c}v_{k_c}^2=M_2(v_n).
		\label{eq:appendix-tree-moment-matching}
	\end{equation}
	For each current lattice node, the implementation tests the local child-index triplets
	\begin{equation}
		(n_a,n_b,n_d),\quad
		(n_a,n_b,n_e),\quad
		(n_a,n_d,n_e),\quad
		(n_a,n_b,n_c),
		\label{eq:appendix-tree-triplets}
	\end{equation}
	in this order. The first triplet with valid child indices and transition probabilities no smaller than the numerical tolerance is retained. If none of these four candidates is admissible, a local mixed fallback constructed from neighbouring up/down configurations is used. Tiny negative stored probabilities are clipped to zero, the child indices are restricted to the finite lattice, and the probabilities are renormalised. The lower and upper bounds are enlarged until all stored transitions are compatible with the finite recombining state set.
	
	After the admissibility search has been completed, the finite lattice is relabelled as $\{v_j\}$. For each parent $v_j$, the accepted child indices form the set $\mathcal C(j)$ used in Section~\ref{subsec:variance-tree}, and $\pi_{j\to h}$ denotes the probability assigned to child $h\in\mathcal C(j)$. The tree is now complete: the same stored child maps and probabilities will redistribute probability mass in the forward sweep and form conditional expectations in the backward sweep.
	
	\subsection{Spatial grids}
	\label{app:spatial-grids}
	
	The leverage surface is stored on a fixed log-stock grid. Let $[x_{\min},x_{\max}]$ be its physical domain and define
	\begin{equation}
		\Delta x=\frac{x_{\max}-x_{\min}}{N_S},
		\qquad
		x_\ell=x_{\min}+\ell\Delta x,
		\qquad
		S_\ell=e^{x_\ell},
		\qquad
		\ell=0,\ldots,N_S.
		\label{eq:appendix-x-grid}
	\end{equation}
	Linear interpolation on this grid is denoted by $\mathcal I_x$. In the market-data implementation, the local-volatility input is tabulated in log-moneyness $\log(S/S_0)$ and is evaluated by bilinear interpolation in $(t,\log(S/S_0))$, together with boundary-cell extrapolation and the volatility floor used by the calibration driver. The same spatial-domain convention is retained at every level of a grid-refinement experiment.
	
	In the synthetic driver, the symmetric log-stock half-width is generated by
	\begin{equation}
		\ell_x
		=
		a_x\sqrt{T}\sqrt{\log(1/\varepsilon_{\rm fd})}
		+
		\left|
		\int_0^T
		\left(r_t-q_t-\frac12a_x^2\right)\dd t
		\right|,
		\qquad
		x_{\min,\max}=\log S_0\mp\ell_x,
		\label{eq:appendix-synthetic-x-domain}
	\end{equation}
	where $a_x$ is used only to set the numerical domain and is not a parameter of the local-volatility model.
	
	The transformed coordinate is advanced on a separate uniform grid. Since $g(0,S_0)=0$, its initial value is
	\begin{equation}
		y_{\rm init}
		=V_0-\frac{\omega}{\rho}g(0,S_0)
		=V_0.
		\label{eq:appendix-yinit}
	\end{equation}
	With transformed-domain tail tolerance $\varepsilon_{\rm fd}\in(0,1)$, set
	\begin{equation}
		\sigma_Y
		=
		\left|\frac{\omega}{\rho}\right|
		\sqrt{1-\rho^2}
		\sqrt{\max\{V_0,\theta_V\}},
		\label{eq:appendix-sigmaY}
	\end{equation}
	and
	\begin{equation}
		\ell_Y
		=
		\sigma_Y\sqrt{T}\sqrt{2\log(1/\varepsilon_{\rm fd})}
		+|\kappa_V(\theta_V-V_0)|T
		+0.25\,\omega\sqrt{\max\{V_0,\theta_V\}}\,T.
		\label{eq:appendix-y-width}
	\end{equation}
	
	The transformed domain is
	\begin{equation}
		y_{\min}=y_{\rm init}-\ell_Y,
		\qquad
		y_{\max}=y_{\rm init}+\ell_Y,
		\label{eq:appendix-y-domain}
	\end{equation}
	and the corresponding grid is
	\begin{equation}
		\Delta y=\frac{y_{\max}-y_{\min}}{N_Y},
		\qquad
		y_m=y_{\min}+m\Delta y,
		\qquad
		m=0,\ldots,N_Y.
		\label{eq:appendix-y-grid}
	\end{equation}
	Linear interpolation on this grid is denoted by $\mathcal I_Y$. Let $\varepsilon_V>0$ be the positive floor used in the conditional-variance update, and let $\varepsilon_{\rm dens}>0$ denote the threshold below which the projected density denominator is treated as numerically negligible. The initial leverage slice is defined directly on the stock grid by
	\begin{equation}
		L^0_\ell=L(0,S_\ell)
		=\frac{\sigLV(0,S_\ell)}{\sqrt{\max\{V_0,\varepsilon_V\}}},
		\qquad \ell=0,\ldots,N_S.
		\label{eq:appendix-initial-leverage}
	\end{equation}
	We write $P^i_{j,m}$ for the forward density-like array at time $t_i$, variance node $v_j$ and transformed node $y_m$. The initial mass is placed at the exact Heston-tree root $V_0$ and at the interior $Y$-grid node nearest to $y_{\rm init}$. Equation~\eqref{eq:appendix-initial-leverage} and this point-mass initialisation complete the forward initial state.
	
	\subsection{Forward calibration over one time interval}
	\label{app:forward-calibration}
	
	Assume that the accepted density $P^i$ and leverage slice $L^i_\ell=L(t_i,S_\ell)$ are available at time $t_i$. One forward step must produce both the propagated density $P^{i+1}$ and the next accepted leverage slice $L^{i+1}$. The operations are performed in the following order.
	
	\subsubsection*{Step 1: construction of the leverage-dependent arrays}
	
	The transformed map is built from the accepted leverage slice. First compute the unshifted primitive
	\begin{equation}
		\Gamma^i_0=0,
		\qquad
		\Gamma^i_\ell=\Gamma^i_{\ell-1}
		+\frac{\Delta x}{2}
		\left(\frac{1}{L^i_{\ell-1}}+\frac{1}{L^i_\ell}\right),
		\qquad \ell=1,\ldots,N_S.
		\label{eq:appendix-G}
	\end{equation}
	With $x_*=\log S_0$, impose the anchoring convention $g(t_i,S_0)=0$ by setting
	\begin{equation}
		g^i_\ell
		=\Gamma^i_\ell-\mathcal I_x[\Gamma^i](x_*).
		\label{eq:appendix-g}
	\end{equation}
	The time derivative of the map satisfies
	\begin{equation}
		g_t(t,S)
		=-\int_{\log S_0}^{\log S}
		\frac{L_t(t,e^x)}{L^2(t,e^x)}\,\dd x.
		\label{eq:appendix-gt}
	\end{equation}
	For $i\geq1$, $L_t(t_i,S_\ell)$ is approximated by $(L^i_\ell-L^{i-1}_\ell)/\Delta t$ and the integral is evaluated by the trapezoidal rule; at $t_0$, the array $g_t$ is set to zero. Finally, $L_S$ is computed by centred differences at the interior stock-grid nodes and one-sided differences at the endpoints. The four arrays
	\[
	L^i,\qquad g^i,\qquad g_t^i,\qquad
	\Lambda^i:=L^i+S L_S^i
	\]
	are therefore available on the stock grid. They are the inputs to the map inversion and coefficient evaluation at the current time level.
	
	\subsubsection*{Step 2: inversion of the transformed map}
	
	For each pair $(y_m,v_j)$, the transformed relation gives
	\[
		y_m=v_j-\frac{\omega}{\rho}g(t_i,S).
	\]
	Thus the stock value attached to the tree--PDE grid point is found by solving
	\begin{equation}
		g(t_i,S)=\frac{\rho}{\omega}(v_j-y_m).
		\label{eq:appendix-inversion-target}
	\end{equation}
	The tabulated function $g(t_i,\cdot)$ is monotone on the log-stock grid. The target is bracketed between neighbouring entries of $g^i$, and linear interpolation in the corresponding log-stock coordinate gives
	\[
		S^i_{j,m}:=S(t_i,y_m,v_j).
	\]
	Targets outside the tabulated range are clamped to the boundary stock nodes. Once $S^i_{j,m}$ has been recovered, the same stock-grid interpolation supplies the local values of $L^i$, $g_t^i$ and $\Lambda^i=L^i+S L_S^i$. Hence all leverage-dependent quantities needed by the conditional PDE are now known at every $(y_m,v_j)$.
	
	\subsubsection*{Step 3: evaluation of the conditional forward coefficients}
	
	For the interval $[t_i,t_{i+1}]$, let $\mu_{Y,i}(S,V)$ denote the drift in 
	\eqref{eq:muY-main}, with $r_t-q_t$ replaced by the exact interval average $\bar r_i-\bar q_i$ and all leverage-dependent quantities frozen at the accepted time level $t_i$. Define
	\begin{equation}
		\mu^{i,j}_{Y,m}
		=\mu_{Y,i}(S^i_{j,m},v_j),
		\qquad
		\nu_j
		=\frac12\frac{\omega^2(1-\rho^2)}{\rho^2}v_j.
		\label{eq:appendix-forward-grid-coefficients}
	\end{equation}
	At a fixed tree node $v_j$, the diffusion coefficient $\nu_j$ is constant over the entire $Y$ grid, whereas the drift varies with $m$ through the recovered stock value $S^i_{j,m}$. The coefficient arrays are now complete and the numerical forward solve can begin.
	
	\subsubsection*{Step 4: one-dimensional finite-volume propagation, including substepping}
	
	At node $v_j$, the conditional density solves
	\begin{equation}
		\partial_t p_j
		=-\partial_y(\mu_jp_j)+\nu_j\partial_{yy}p_j,
		\qquad
		\mu_j(t_i,y_m)=\mu^{i,j}_{Y,m}.
		\label{eq:appendix-forward-pde}
	\end{equation}
	The number of temporal substeps is selected at this stage, after the coefficients have been evaluated. Define the local drift indicator
	\begin{equation}
		C_{i,j}
		=\max_{1\leq m\leq N_Y-1}
		\frac{|\mu^{i,j}_{Y,m}|\Delta t}{\Delta y}.
		\label{eq:appendix-courant}
	\end{equation}
	With target value $C_*=0.75$ and cap $n_{\max}$, set
	\begin{equation}
		n_s
		=\min\left\{n_{\max},
		\max\left(1,\left\lceil\frac{C_{i,j}}{C_*}\right\rceil\right)\right\},
		\qquad
		\Delta t_s=\frac{\Delta t}{n_s}.
		\label{eq:appendix-substeps}
	\end{equation}
	The reported forward calibration uses $n_{\max}=8$. The drift and diffusion coefficients are frozen during these substeps. To make the propagation explicit, write
	\[
		P^{i,0}_{j,m}=P^i_{j,m},
		\qquad
		P^{i,n_s}_{j,m}=\widetilde P^{i+1}_{j,m},
	\]
	where $\widetilde P^{i+1}_{j,m}$ denotes the density after the conditional $Y$ solve and before the tree transition.
	
	For one substep, define
	\begin{equation}
		\mu_{m+1/2}
		=\frac12\left(\mu^{i,j}_{Y,m+1}+\mu^{i,j}_{Y,m}\right),
		\qquad
		\mu^+=\max(\mu,0),
		\qquad
		\mu^-=\min(\mu,0).
	\end{equation}
	The conservative upwind--diffusion flux is
	\begin{equation}
		\mathcal F_{m+1/2}
		=\mu^+_{m+1/2}p_m
		+\mu^-_{m+1/2}p_{m+1}
		-\nu_j\frac{p_{m+1}-p_m}{\Delta y}.
		\label{eq:appendix-forward-flux}
	\end{equation}
	The fully implicit conservative balance over one substep is
	\begin{equation}
		P^{i,s+1}_{j,m}
		+\lambda_s\left[
		\mathcal F_{j,m+1/2}\!\left(P^{i,s+1}_{j,\cdot}\right)
		-\mathcal F_{j,m-1/2}\!\left(P^{i,s+1}_{j,\cdot}\right)
		\right]
		=P^{i,s}_{j,m},
		\qquad
		\lambda_s=\frac{\Delta t_s}{\Delta y}.
		\label{eq:appendix-forward-balance}
	\end{equation}
	Let
	\[
		\delta_j=\frac{\nu_j\Delta t_s}{\Delta y^2}.
	\]
	Substituting the flux into \eqref{eq:appendix-forward-balance} gives the fully implicit interior coefficients
	\begin{align}
		a^F_{j,m}&=-\delta_j-\lambda_s\mu^+_{m-1/2},\nonumber\\
		b^F_{j,m}&=1+2\delta_j
		+\lambda_s\left(\mu^+_{m+1/2}-\mu^-_{m-1/2}\right),\nonumber\\
		c^F_{j,m}&=-\delta_j+\lambda_s\mu^-_{m+1/2}.
		\label{eq:appendix-forward-coefficients}
	\end{align}
	Thus each substep maps $P^{i,s}_{j,\cdot}$ to $P^{i,s+1}_{j,\cdot}$ through
	\[
		a^F_{j,m}P^{i,s+1}_{j,m-1}
		+b^F_{j,m}P^{i,s+1}_{j,m}
		+c^F_{j,m}P^{i,s+1}_{j,m+1}
		=P^{i,s}_{j,m},
		\qquad s=0,\ldots,n_s-1.
	\]
	Zero total flux is imposed at $y_{\min}$ and $y_{\max}$. After the final substep, the vector $\widetilde P^{i+1}_{j,\cdot}$ is ready to be passed from the current variance node to its three children.
	
	\subsubsection*{Step 5: redistribution through the Heston tree}
	
	For each child node $v_h$ at time $t_{i+1}$, the new joint density is obtained from the redistribution formula \eqref{eq:forward-tree-redistribution-main}. This completes the propagation of the joint density from $t_i$ to $t_{i+1}$. The resulting array $P^{i+1}$ is the input to the stock-grid projection. It is not propagated again during the projection iterations used to determine $L^{i+1}$.
	
	\subsubsection*{Step 6: projection onto the stock grid and conditional moment}
	
	For each new-time tree node, define the transformed-grid interpolant
	\begin{equation}
		\widehat P^{i+1}_j(y)
		=\mathcal I_Y[P^{i+1}_{j,\cdot}](y).
		\label{eq:appendix-P-interpolant}
	\end{equation}
	At projection iteration $n$, let $g^{i+1,[n]}$ be the current guess for the new-time transformed map. At stock node $S_\ell$ and variance node $v_j$, this map determines
	\begin{equation}
		y^{i+1,[n]}_{j,\ell}
		=v_j-\frac{\omega}{\rho}g^{i+1,[n]}(S_\ell).
		\label{eq:appendix-y-stock}
	\end{equation}
	Because $y^{i+1,[n]}_{j,\ell}$ does not generally coincide with a $Y$-grid node, the density is evaluated there by interpolation. Only coordinates strictly inside $[y_{\min},y_{\max}]$ contribute, and non-positive interpolated values are omitted when the conditional moment is formed. The estimate of $\E[V_{t_{i+1}}\mid S_{t_{i+1}}=S_\ell]$ is the quantity $\widehat v^{i+1,[n]}(S_\ell)$ defined in \eqref{eq:conditional-variance-main}. At fixed $S_\ell$, the Jacobian of the transformation is independent of $v_j$ and cancels from the ratio. If the denominator in \eqref{eq:conditional-variance-main} is no larger than $\varepsilon_{\rm dens}$, the conditional moment is replaced by the unconditional mean of $V_{t_{i+1}}$ represented by the positive part of the current joint density.
	
	\subsubsection*{Step 7: leverage update, projection iteration and acceptance}
	
	The direct Markovian-projection update associated with iteration $n$ is given by \eqref{eq:discrete-leverage-update}.
	The dependence is implicit because the projection uses the new-time map, while that map is itself constructed from the new leverage slice:
	\[
		L^{i+1,[n]}
		\longrightarrow g^{i+1,[n]}
		\longrightarrow \widehat v^{i+1,[n]}
		\longrightarrow \widehat L^{i+1,[n+1]}.
	\]
	The implementation resolves this dependence without repeating the forward transport step. After $P^{i+1}$ has been propagated and redistributed once, set $g^{i+1,[0]}:=g^i$. The first projection produces the direct update
	$L^{i+1,[1]}:=\widehat L^{i+1,[1]}$, from which the arrays
	$g^{i+1,[1]}$, $g_t^{i+1,[1]}$ and
	$L^{i+1,[1]}+S L_S^{i+1,[1]}$ are rebuilt. For every subsequent pass,
	$P^{i+1}$ remains fixed and
	\begin{equation}
		L^{i+1,[n+1]}
		=
		\alpha\widehat L^{i+1,[n+1]}
		+(1-\alpha)L^{i+1,[n]},
		\qquad n\geq1.
		\label{eq:appendix-relaxation}
	\end{equation}
	The updated map is then used in the next evaluation of
	\eqref{eq:conditional-variance-main}. After the final pass, the leverage slice and its associated auxiliary arrays are accepted, $P^{i+1}$ becomes the current density, and the algorithm advances to $[t_{i+1},t_{i+2}]$. Repeating Steps 1--7 over all intervals produces the calibrated leverage surface on the full time--stock grid.
	
	\subsection{Independent branch}
	\label{app:independent-branch}
	
	When $\rho=0$, the order of the forward algorithm is unchanged, but the transformed-map operations disappear. The spatial coordinate is $x=\log S$, the factorwise coefficients are those in 
	\eqref{eq:rho0-coefficients-main}, and the conditional equation 
	\eqref{eq:forward-rho0-main} is advanced directly on the common log-stock grid. There is therefore no need to construct $g$ and $g_t$, invert $Y\mapsto S$, or interpolate the propagated density back to the stock grid.
	
	Unlike the correlated equation in $Y$, the diffusion coefficient
	\[
		\nu^{i,j}_m
		=\frac12L^2(t_i,e^{x_m})v_j
	\]
	varies across the log-stock grid. The independent-branch forward solver therefore uses the conservative cell-face flux
	\begin{equation}
		\mathcal F^x_{m+1/2}
		=(\mu^x_{m+1/2})^+p_m
		+(\mu^x_{m+1/2})^-p_{m+1}
		-\frac{\nu^{i,j}_{m+1}p_{m+1}-\nu^{i,j}_{m}p_m}{\Delta x},
		\label{eq:appendix-rho0-forward-flux}
	\end{equation}
	where $\mu^x_{m+1/2}$ is the arithmetic average of the neighbouring drift values. Thus the diffusion term is discretised through neighbouring values of $\nu p$ and is not treated as spatially constant. The same implicit conservative balance as in
	\eqref{eq:appendix-forward-balance} is then applied with $\Delta y$ replaced by $\Delta x$.
	
	After the log-stock solve, the density is redistributed through the same Heston tree, and the conditional moment is read directly from the common grid as
	\begin{equation}
		\widehat v^{i+1}_m
		=
		\frac{\sum_jv_jP^{i+1}_{j,m}}
		{\sum_jP^{i+1}_{j,m}}.
	\end{equation}
	The leverage update and the advance to the next time level then proceed as in Step 7. Thus the independent branch has the same forward-calibration logic, but not the map inversion and stock-grid projection required by the correlated branch. Its backward discretisation is given explicitly in Subsection~\ref{app:backward-pricing}.
	
	\subsection{Backward pricing on the calibrated leverage surface}
	\label{app:backward-pricing}
	
	After the forward loop has reached $T$, every leverage slice and its auxiliary arrays are fixed. The backward pass now reverses the splitting: at each parent variance node it first averages the already known child values and then solves the one-dimensional continuation equation over the preceding time interval.
	
	Define the stock value represented by a spatial grid point as
	\begin{equation}
		\mathsf S(t_i,\zeta_m,v_j)
		=
		\begin{cases}
			e^{x_m}, & \rho=0,\\
			S(t_i,y_m,v_j), & \rho\neq0.
		\end{cases}
		\label{eq:appendix-recovered-stock}
	\end{equation}
	The backward array is initialised at maturity by
	\begin{equation}
		U^{N_t}_{j,m}
		=
		\left(G-F_0\frac{\mathsf S(T,\zeta_m,v_j)}{S_0}e^{-cT}\right)^+.
		\label{eq:appendix-terminal-array}
	\end{equation}
	
	For $i=N_t-1,\ldots,0$, the following sequence is repeated.
	
	\subsubsection*{Step B1: conditional expectation over the tree}
	
	At parent node $v_j$, form the conditional expectation $R^i_{j,m}$ in \eqref{eq:tree-expectation-main}. This vector is the terminal datum for the spatial continuation solve over $[t_i,t_{i+1}]$.
	
	\subsubsection*{Step B2: coefficient evaluation and backward substeps}
	
	In the correlated branch, the stored leverage slice at $t_i$ is used to recover $S(t_i,y_m,v_j)$ and to evaluate
	\begin{equation}
		\partial_tu_j
		+\mu_j\partial_yu_j
		+\nu_j\partial_{yy}u_j
		-r_tu_j=0,
		\qquad
		\mu_j(t_i,y_m)=\mu_Y(t_i,S(t_i,y_m,v_j),v_j).
		\label{eq:appendix-backward-pde}
	\end{equation}
	The same indicator 
	\eqref{eq:appendix-courant} and target $C_*=0.75$ select equal implicit substeps, now with cap $n_{\max}=4$. For $\Delta t_s=\Delta t/n_s$, the undiscounted centred interior coefficients are
	\begin{align}
		a_{j,m}
		&=-\nu_j\frac{\Delta t_s}{\Delta y^2}
		+\mu^{i,j}_{Y,m}\frac{\Delta t_s}{2\Delta y},\nonumber\\
		b_{j,m}
		&=1+2\nu_j\frac{\Delta t_s}{\Delta y^2},\nonumber\\
		c_{j,m}
		&=-\nu_j\frac{\Delta t_s}{\Delta y^2}
		-\mu^{i,j}_{Y,m}\frac{\Delta t_s}{2\Delta y}.
		\label{eq:appendix-backward-coefficients}
	\end{align}
	For $\rho=0$, define at node $v_j$
	\[
		\mu^{i,j}_{x,m}
		=\bar r_i-\bar q_i-\frac12L^2(t_i,e^{x_m})v_j,
		\qquad
		\nu^{i,j}_{x,m}
		=\frac12L^2(t_i,e^{x_m})v_j,
	\]
	where $\bar r_i$ and $\bar q_i$ are the interval averages defined above. The local drift indicator is computed with $\Delta x$, and the centred implicit coefficients for the backward equation are
	\begin{align}
		a^x_{j,m}
		&=-\nu^{i,j}_{x,m}\frac{\Delta t_s}{\Delta x^2}
		+\mu^{i,j}_{x,m}\frac{\Delta t_s}{2\Delta x},\nonumber\\
		b^x_{j,m}
		&=1+2\nu^{i,j}_{x,m}\frac{\Delta t_s}{\Delta x^2},\nonumber\\
		c^x_{j,m}
		&=-\nu^{i,j}_{x,m}\frac{\Delta t_s}{\Delta x^2}
		-\mu^{i,j}_{x,m}\frac{\Delta t_s}{2\Delta x}.
		\label{eq:appendix-backward-rho0-coefficients}
	\end{align}
	At this point the coefficient matrix and the number of substeps are fixed. Before the systems are solved, the financial boundary values must be supplied.
	
	\subsubsection*{Step B3: boundary conditions}
	
	Boundary values are imposed in the recovered stock variable. Let $\zeta_{\min}=x_{\min}$ and $\zeta_{\max}=x_{\max}$ for $\rho=0$, and $\zeta_{\min}=y_{\min}$ and $\zeta_{\max}=y_{\max}$ for $\rho\neq0$. Define
	\begin{equation}
		S_L^{i,j}=\mathsf S(t_i,\zeta_{\min},v_j),
		\qquad
		S_R^{i,j}=\mathsf S(t_i,\zeta_{\max},v_j),
		\qquad
		A_T=\frac{F_0}{S_0}e^{-cT}.
		\label{eq:appendix-stock-boundaries}
	\end{equation}
	For the terminal-only put-like guarantee, the asymptotic values are
	\begin{equation}
		B^{i,j}_{\xi}
		=
		\left(GD_r(t_i,T)-A_TS^{i,j}_{\xi}D_q(t_i,T)\right)^+,
		\qquad \xi\in\{L,R\}.
		\label{eq:appendix-boundary}
	\end{equation}
	For the surrenderable guarantee, the right boundary is zero. At an admissible surrender date $t_i>0$, the left boundary is the obstacle,
	\begin{equation}
		B^{i,j}_L
		=
		\left(G-e^{-\kappa_s(T-t_i)}F_0
		\frac{S_L^{i,j}}{S_0}e^{-ct_i}\right)^+.
		\label{eq:appendix-surrender-left-boundary}
	\end{equation}
	At a date at which surrender is not allowed, the low-stock linear asymptote is propagated from the next grid date $u=t_{i+1}$. With
	\begin{equation}
		\beta_u
		=\frac{F_0}{S_0}e^{-cu}e^{-\kappa_s(T-u)},
		\label{eq:appendix-surrender-beta}
	\end{equation}
	the boundary values are
	\begin{equation}
		B^{i,j}_L
		=
		\left(GD_r(t_i,u)-\beta_uS_L^{i,j}D_q(t_i,u)\right)^+,
		\qquad
		B^{i,j}_R=0.
		\label{eq:appendix-continuation-left-boundary}
	\end{equation}
	Under the reported discrete-time surrender convention, every positive pre-maturity grid date is admissible and $t_0=0$ is the only such continuation date. Since discounting is applied after the spatial solve, prescribed time-$t_i$ boundary values are divided by $D_r(t_i,t_{i+1})$ when they are inserted into the undiscounted tridiagonal system.
	
	\subsubsection*{Step B4: implicit continuation solve and discounting}
	
	The right-hand side is $R^i_{j,m}$ at the first substep and the preceding substep solution thereafter. At each substep, the known boundary contributions from Step B3 are moved to the right-hand side and the tridiagonal system is solved by the Thomas algorithm. After the last spatial substep, the exact factor $D_r(t_i,t_{i+1})$ is applied to obtain the continuation value $\widetilde U^i_{j,m}$. This substepping improves temporal resolution of the transformed transport term; it is not, by itself, a proof of an M-matrix property for the centred backward convection discretisation.
	
	\subsubsection*{Step B5: obstacle projection and backward advance}
	
	At each admissible surrender date, the continuation value is projected onto the immediate surrender payoff,
	\begin{equation}
		U^i_{j,m}
		=
		\max\left\{
		\widetilde U^i_{j,m},
		\left(G-e^{-\kappa_s(T-t_i)}F_0
		\frac{\mathsf S(t_i,\zeta_m,v_j)}{S_0}e^{-ct_i}\right)^+
		\right\}.
		\label{eq:appendix-obstacle}
	\end{equation}
	The same projection is imposed on the boundary values at surrender dates, whereas no obstacle projection is applied at $t_0=0$. The resulting array $U^i$ becomes the input to Step B1 at the preceding time level. Repetition down to $t_0$ completes the backward valuation. For each variance node, the surrender boundary is extracted from the transition between equality with the obstacle and strict continuation, yielding the state-dependent surface $B^{\mathrm{SLV}}(t,V)$ reported in Section~\ref{sec:numerics}.

	\subsection{Local-volatility benchmark}
	\label{app:lv-benchmark}
	
	The LV benchmark is solved only backward in time on the log-stock grid introduced in Subsection~\ref{app:spatial-grids}. Writing $x=\log S$, define
	\[
		\mu_{\rm LV}(t,x)
		=r_t-q_t-\frac12\sigLV^2(t,e^x),
		\qquad
		\nu_{\rm LV}(t,x)
		=\frac12\sigLV^2(t,e^x).
	\]
	Between admissible surrender dates, the value satisfies
	\begin{equation}
		\partial_tu
		+\mu_{\rm LV}(t,x)\partial_xu
		+\nu_{\rm LV}(t,x)\partial_{xx}u
		-r_tu=0.
		\label{eq:appendix-lv-backward}
	\end{equation}
	The reported implementation uses the same time grid and a one-dimensional tridiagonal implicit backward discretisation. The terminal payoff is the one-factor version of \eqref{eq:appendix-terminal-array}; exact interval discounting is applied after the spatial solve. The terminal-only boundary values are obtained from \eqref{eq:appendix-boundary} with the stock endpoints $e^{x_{\min}}$ and $e^{x_{\max}}$. For the surrenderable guarantee, the high-stock boundary is zero, the low-stock boundary is the immediate obstacle at admissible dates, and the continuation-date asymptote is the one-factor counterpart of \eqref{eq:appendix-continuation-left-boundary}. The obstacle projection is imposed at every positive admissible surrender date and omitted at $t_0=0$. No forward density propagation is required for this benchmark.

	\refstepcounter{section}
	
	\section*{Appendix \thesection. Proof of Proposition~\ref{prop:fee-monotone}}
	\label{app:proof-fee-monotonicity}
	
	\phantomsection
	\addcontentsline{toc}{section}{Appendix \thesection. Proof of Proposition~\ref{prop:fee-monotone}}
	
	\setcounter{equation}{0}
	\renewcommand{\theequation}{\thesection.\arabic{equation}}
	
	\begin{proof}
		All the expectations are finite. Indeed,
		\[
		0\leq\Phi_{\kappa_s}(t)\leq G,
		\]
		and therefore
		\[
		0\leq U_{\mathrm{Sur}}(0;c,\kappa_s)\leq \overline{D}\,G.
		\]
		No separate terminal payoff needs to be specified, since
		\[
		\Phi_{\kappa_s}(T)=\left(G-F_T^{(c)}\right)^+,
		\]
		which is independent of $\kappa_s$.
		
		\emph{(i)} Fix $\kappa_2>\kappa_1$. For every $t\leq T$,
		\[
		e^{-\kappa_2(T-t)}F_t^{(c)}
		\leq
		e^{-\kappa_1(T-t)}F_t^{(c)},
		\]
		because $T-t\geq0$ and $F_t^{(c)}>0$. Since
		$x\mapsto(G-x)^+$ is nonincreasing, it follows that
		\begin{equation}
			\label{eq:pathwise}
			\Phi_{\kappa_2}(t)\geq\Phi_{\kappa_1}(t)
			\qquad\text{pathwise for every }t\leq T.
		\end{equation}
		For $\tau\in\mathcal T^{\mathcal D}_{0,T}$, define
		\[
		J_\tau(\kappa_s)
		=
		\E^{\Q}\!\left[D_r(0,\tau)\Phi_{\kappa_s}(\tau)\right].
		\]
		Evaluating \eqref{eq:pathwise} at $t=\tau$, multiplying by the positive
		discount factor and taking expectations gives
		$J_\tau(\kappa_2)\geq J_\tau(\kappa_1)$. Since the two suprema are taken
		over the same set of admissible stopping times,
		\[
		U_{\mathrm{Sur}}(0;c,\kappa_2)
		\geq
		U_{\mathrm{Sur}}(0;c,\kappa_1).
		\]
		
		\emph{(ii)} On the event $\mathcal E$, we have $\tau^*<T$, and hence
		\[
		e^{-\kappa_1(T-\tau^*)}>e^{-\kappa_2(T-\tau^*)}.
		\]
		Moreover, the definition of $\mathcal E$ implies that both payoffs lie in
		the region where the positive part is active. Therefore, on $\mathcal E$,
		\begin{align*}
		\Phi_{\kappa_2}(\tau^*)-\Phi_{\kappa_1}(\tau^*)
		&=F_{\tau^*}^{(c)}
		\left[e^{-\kappa_1(T-\tau^*)}-e^{-\kappa_2(T-\tau^*)}\right]\\
		&>0.
		\end{align*}
		Outside $\mathcal E$, the difference is nonnegative by
		\eqref{eq:pathwise}. Since $D_r(0,\tau^*)>0$ and $\Q(\mathcal E)>0$,
		\[
		J_{\tau^*}(\kappa_2)>J_{\tau^*}(\kappa_1).
		\]
		Using the optimality of $\tau^*$ for $\kappa_1$, we obtain
		\[
		U_{\mathrm{Sur}}(0;c,\kappa_2)
		\geq J_{\tau^*}(\kappa_2)
		>J_{\tau^*}(\kappa_1)
		=U_{\mathrm{Sur}}(0;c,\kappa_1).
		\]
		
		\emph{(iii)} Let $\kappa_2>\kappa_1$ and write
		\[
		c_1=c^*(\kappa_1),
		\qquad
		c_2=c^*(\kappa_2).
		\]
		Since $\Gamma(c)$ does not depend on $\kappa_s$, part~(i) gives
		\[
		H(c,\kappa_2)\geq H(c,\kappa_1)
		\]
		for every fixed $c$. In particular,
		\[
		H(c_1,\kappa_2)\geq H(c_1,\kappa_1)=0.
		\]
		If $c_2<c_1$, the strict decrease of $c\mapsto H(c,\kappa_2)$ would imply
		\[
		H(c_1,\kappa_2)<H(c_2,\kappa_2)=0,
		\]
		which is a contradiction. Hence $c_2\geq c_1$.
		
		Under the additional hypotheses of the strict statement in part~(iii),
		part~(ii), evaluated at $c=c_1$, yields
		\[
		H(c_1,\kappa_2)>H(c_1,\kappa_1)=0.
		\]
		If $c_2\leq c_1$, the decrease of $H(\cdot,\kappa_2)$ would instead give
		\[
		H(c_1,\kappa_2)\leq H(c_2,\kappa_2)=0,
		\]
		again a contradiction. Therefore $c_2>c_1$.
	\end{proof}

	

\end{document}